\newcommand{\stkout}[1]{\ifmmode\text{\sout{\ensuremath{#1}}}\else\sout{#1}\fi}
\newtheorem{Theorem}{Theorem}
\newtheorem{mydef}{Definition}
\let\oldnl\nl% Store \nl in \oldnl
\newcommand{\nonl}{\renewcommand{\nl}{\let\nl\oldnl}}% Remove line number for one line
\begin{document}

\title{Erasing-based lossless compression method for streaming floating-point time series}

%\titlerunning{Short form of title}        % if too long for running head

\author{Ruiyuan Li$^1$ \and Zheng Li$^1$ \and Yi Wu$^1$ \and Chao Chen$^1$ \and Songtao Guo$^1$ \and Ming Zhang$^2$ \and Yu~Zheng$^{3,4}$}

\authorrunning{Ruiyuan Li et al.} % if too long for running head

\institute{Ruiyuan Li \\ \email{ruiyuan.li@cqu.edu.cn}
           \\
           \\
           Zheng Li \\ \email{zhengli@cqu.edu.cn}
           \\
           \\
           Yi Wu \\ \email{wu\_yi@cqu.edu.cn}
           \\
           \\
           Chao Chen \\ \email{cschaochen@cqu.edu.cn}
           \\
           \\
           Songtao Guo \\ \email{guosongtao@cqu.edu.cn}
           \\
           \\
           Ming Zhang \\ \email{ming.zhang@gzpi.com.cn}
           \\
           \\
           Yu Zheng \\ \email{msyuzheng@outlook.com 
           \\
           \\
           $^1$ Chongqing University, Chongqing, China \\
           $^2$ Guangzhou Urban Planning \& Design Survey Research Institute, Guangzhou, China \\
           $^3$ JD Intelligent Cities Research, Beijing, China \\
           $^4$ Xidian University, Xi'an, China}
}

\date{Received: date / Accepted: date}
% The correct dates will be entered by the editor

\maketitle

\begin{abstract}
% 大量的时序数据以前所未有的速率产生。对时序数据的无损流式压缩对很多应用场景有用。最新的浮点压缩算法基于XOR方法，但是没有充分利用尾随零，导致压缩率不高。
% 这篇文章提出了一个基于擦除的无损压缩算法Elf。其大体思想是通过抹掉浮点数后面的比特，让XOR后的值拥有大量的尾随零，从而能够显著提高压缩率。其难点在于1、如何确定擦除的bit，2、如何从查处后的值恢复，3、如何编码擦除后的值。本文通过严格的理论分析，能够快速发现擦除的比特，并从擦除后的值中完整的恢复数据。为了进一步提高压缩率，提出了针对擦除后的值的编码策略。Elf以流式方式运行，具有O(1)的时间复杂度和空间复杂度，并且理论保证压缩率。
% 通过大量的实验，证明了其有效性。
There are a prohibitively large number of floating-point time series data generated at an unprecedentedly high rate. An efficient, compact and lossless compression for time series data is of great importance for a wide range of scenarios. Most existing lossless floating-point compression methods are based on the XOR operation, but they do not fully exploit the trailing zeros, which usually results in an unsatisfactory compression ratio. This paper proposes an \underline{E}rasing-based \underline{L}ossless \underline{F}loating-point compression algorithm, i.e., {\em Elf}. The main idea of {\em Elf} is to erase the last few bits (i.e., set them to zero) of floating-point values, so the XORed values are supposed to contain many trailing zeros. The challenges of the erasing-based method are three-fold. First, how to quickly determine the erased bits? Second, how to losslessly recover the original data from the erased ones? Third, how to compactly encode the erased data? Through rigorous mathematical analysis, {\em Elf} can directly determine the erased bits and restore the original values without losing any precision. To further improve the compression ratio, we propose a novel encoding strategy for the XORed values with many trailing zeros. Furthermore, observing the values in a time series usually have similar significand counts, we propose an upgraded version of {\em Elf} named {\em Elf}+ by optimizing the significand count encoding strategy, which improves the compression ratio and reduces the running time further. {Both {\em Elf} and {\em Elf}+} work in a streaming fashion. They take only $\mathcal{O}(N)$ (where $N$ is the length of a time series) in time and $\mathcal{O}(1)$ in space, and achieve a notable compression ratio with a theoretical guarantee. Extensive experiments using 22 datasets show the powerful performance of {\em Elf} and {\em Elf}+ compared with 9 advanced competitors for both double-precision and single-precision floating-point values. Moreover, {\em Elf}+ outperforms {\em Elf} by an average relative compression ratio improvement of 7.6\% and compression time improvement of 20.5\%. %In particular, for double-precision floating-point values}, {\em Elf} can respectively achieve relative compression ratio improvement of 12.4\% and 43.9\% over two advanced floating-point compression methods Chimp$_{128}$ and Gorilla. \textcolor{black}{{\em Elf}+ further outperforms {\em Elf} by an average relative compression ratio improvement of 7.6\% and compression time improvement of 20.5\%.} Moreover, \textcolor{black}{both {\em Elf} and {\em Elf}+} even outperform most of the compared general compression methods in terms of compression ratio, while using much less time.
\keywords{Time series compression \and Streaming compression \and Lossless float-point compression}
\end{abstract}

\section{Introduction}\label{sec:introduction}

% 传感器的发展、信息技术的进步，见证了时序数据的爆炸。这些时序数据很多都是浮点型的。举一些浮点时序数据的列子，证明数据量大（对应压缩率要高），产生频率高（对应压缩效率要高），流式产生（对应要流式压缩）。这些时序数据如果按照原生的方式进行传输和存储，将会消耗大量的带宽和存储空间，影响系统缓存，影响查询效率，进而影响上层关键应用（举些例子）。一种办法是在数据传输和存储前对时序数据进行压缩。（回答为什么压缩的问题）。

The advance of sensing devices and Internet of Things \cite{li2015internet, nguyen20216g} has brought about the explosion of time series data. A significant portion of time series data are floating-point values produced at an unprecedentedly high rate in a streaming fashion. For example, there are over ten thousand sensors in a 600,000-kilowatt medium-sized thermal power generating unit, which produce tens of thousands of real-time monitoring floating-point records per second~\cite{zhan2022deepthermal, Yu2021distributed}. Additionally, the sensors on a Boeing 787 can generate up to half a terabyte of data per flight~\cite{jensen2017time}. If these huge floating-point time series data (abbr. time series or time series data in the following) are transmitted and stored in their original format, it would take up a lot of network bandwidth and storage space, which not only causes expensive overhead, but also reduces the system efficiency~\cite{li2020trajmesa, li2021trajmesa} and further affects the usability of some critical applications~\cite{zhan2022deepthermal}.

\begin{figure}[t]
  \centering
  \includegraphics[width=3.2in]{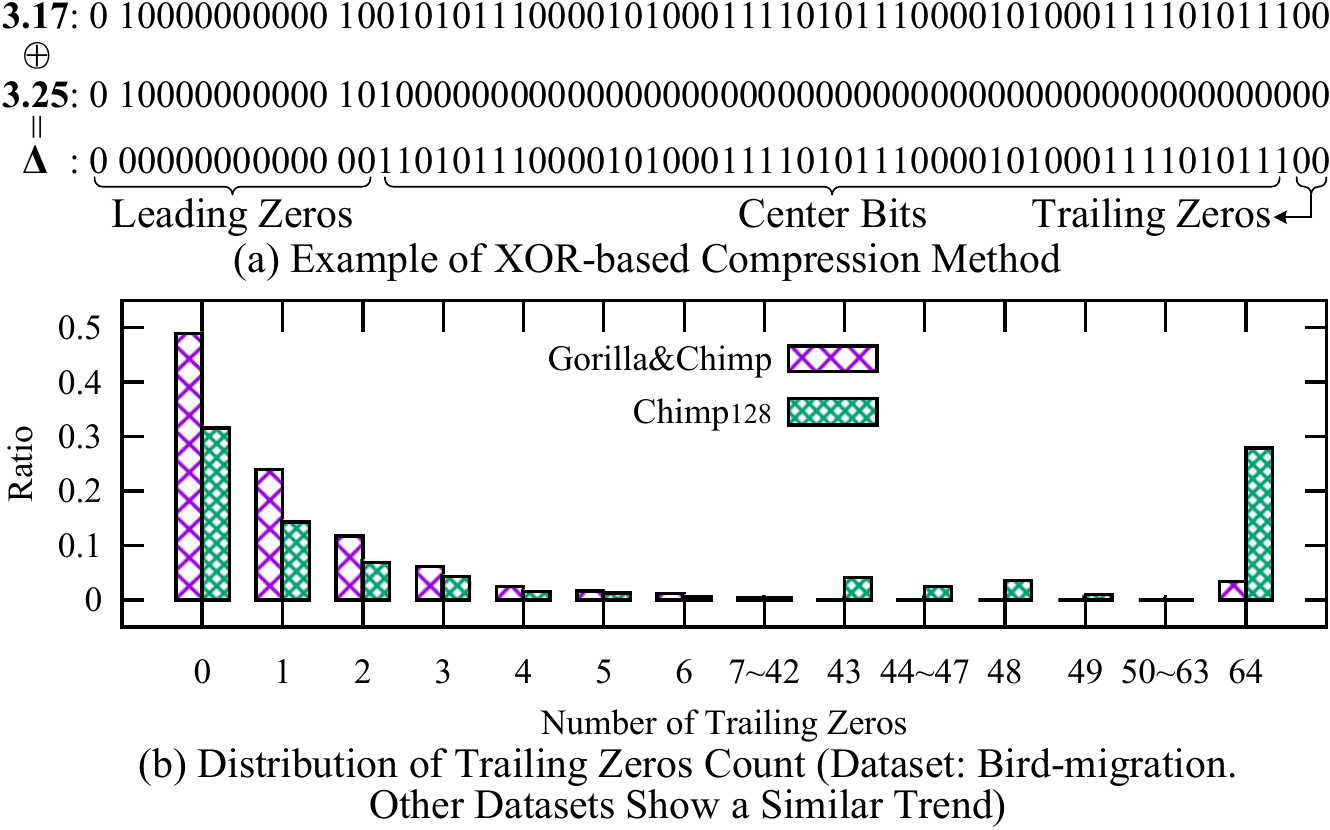}
  \caption{Motivation.}
  \label{fig:Motivation}
\end{figure}

% 现有工作： 1、通用压缩算法，效率慢，未考虑时序信息，不支持流式压缩。2、有损压缩方法，不适合一些场景，例如：科学计算，数据库存储等。3、无损压缩方法，大多采用xor。通过图片解释思想。指出现有问题：Gorilla和Chimp都未解决尾随零很少的问题，导致压缩率受限。Chimp128的思想，但空间复杂度高，且尾随零仍然很多。

\begin{figure*}[t]
  \centering
  \includegraphics[width=6.88in]{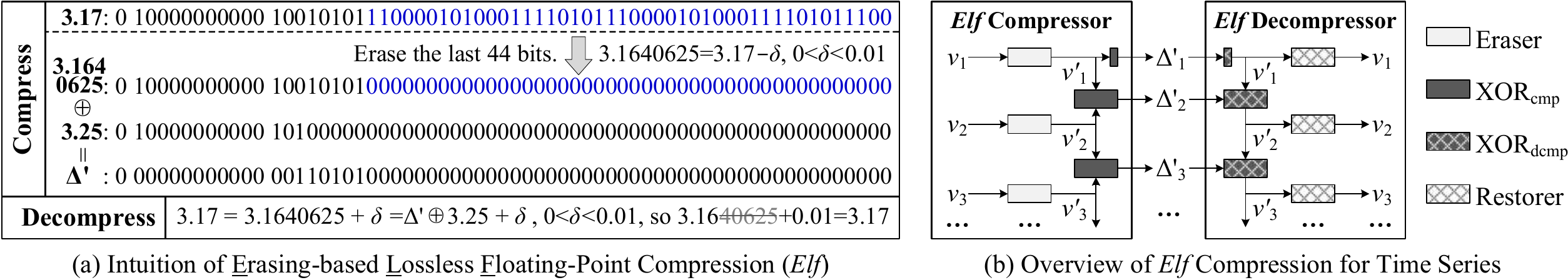}
  \caption{Main Idea of {\em Elf} Compression.}
  \label{fig:MainIdea}
\end{figure*}

One of the best ways is to compress the time series data before transmission and storage. However, it is typically challenging for the compression of floating-point data, because they have a rather complex underlying format~\cite{kahan1996lecture}. General compression algorithms such as LZ4~\cite{collet2013lz4} and Xz~\cite{Xz} do not exploit the intrinsic characteristics (e.g., time ordering) of time-series data. Although they could achieve good compression ratio, they are prohibitively time-consuming. Moreover, most of them run in a batch mode, so they cannot be applied directly to streaming time series data.
%
%Although floating-point time series data can be compressed using general compression algorithms (e.g., LZ4~\cite{collet2013lz4} and Zstd~\cite{collet2016zstd}), these compression methods for general purpose do not exploit the intrinsic characteristics (e.g., time ordering) of time-series data, so they usually achieve an unsatisfactory compression ratio. Moreover, most general compression algorithms run in a batch mode, thus cannot be applied directly to streaming time series data. Furthermore, most of these methods have a significant time overhead, which makes them incompetent to compress large amounts of time series data generated at high speed.
%
There are two categories of compression methods specifically for floating-point time series data, i.e., lossy compression algorithms and lossless compression algorithms. The former~\cite{lazaridis2003capturing, liang2022sz3, lindstrom2014fixed, zhao2022mdz, zhao2021optimizing, liu2021high, liu2021decomposed} would lose some information, and thus it is not suitable for scientific calculation, data management~\cite{xiao2022time, li2020just, Yu2021distributed, he2022trass, li2022apache} or other critical scenarios~\cite{zhan2022deepthermal}. Imagine the scenes of thermal power generation~\cite{zhan2022deepthermal} and flight~\cite{jensen2017time}, any error could result in disastrous consequences. To this end, lossless floating-point time series compression has attracted extensive interest for decades. One representative lossless algorithm is based on the XOR operation. As shown in Figure~\ref{fig:Motivation}(a), given a time series of double-precision floating-point values, suppose the current value and its previous one are $3.17$ and $3.25$, respectively. If not compressed, each value will occupy 64 bits in its underlying storage (detailed in Section~\ref{subsec:ieee754}). When compressing, the XOR-based compression algorithm performs an XOR operation on $3.17$ and $3.25$, i.e., $\Delta = 3.17 \oplus 3.25$. When decompressing, it recovers $3.17$ through another XOR operation, i.e., $3.17 = \Delta \oplus 3.25$. Because two consecutive values in a time series tend to be similar, the underlying representation of $\Delta$ is supposed to contain many \textbf{leading zeros} (and maybe many \textbf{trailing zeros}). Therefore, we can record $\Delta$ by storing the \textbf{center bits} along with the numbers of leading zeros and trailing zeros, which usually takes up less than 64 bits.

Gorilla~\cite{pelkonen2015gorilla} and Chimp~\cite{liakos2022chimp} are two state-of-the-art XOR-based lossless floating-point compression methods. Gorilla assumes that the XORed result of two consecutive floating-point values is likely to have both many leading zeros and trailing zeros. However, the XORed result actually has very few trailing zeros in most cases. As shown in Figure~\ref{fig:Motivation}(b), if we perform an XOR operation on each value with its previous one (just as Gorilla and Chimp did), there are as many as 95\% XORed results containing no more than 5 trailing zeros. To this end, the work~\cite{liakos2022chimp} proposes Chimp$_{128}$. Instead of using the exactly previous one value, Chimp$_{128}$ selects from the previous 128 values the one that produces an XORed result with the most trailing zeros. As a result, Chimp$_{128}$ can achieve a significant improvement in terms of compression ratio. The lesson we can learn from Chimp$_{128}$ is that, increasing the number of trailing zeros of the XORed results plays a significant role in improving the compression ratio for time series. However, as shown in Figure~\ref{fig:Motivation}(b), when we investigate the trailing zeros' distribution of the XORed results produced by Chimp$_{128}$, there are still up to 60\% of them having no more than 5 trailing zeros. %, which indicates that there still remains much room for the performance improvement of Chimp$_{128}$.

% 为解决上述问题，本文的大体思想。 

This paper proposes an \underline{E}rasing-based \underline{L}ossless \underline{F}loa-ting-point compression algorithm, i.e., {\em Elf}. The intuition of {\em Elf} is simple: if we erase last few bits (i.e., set them to zero) of the floating-point values, we can obtain an XORed result with a large number of trailing zeros. As shown in Figure~\ref{fig:MainIdea}(a), if we erase the last 44 bits of 3.17, we can transform it to 3.1640625. By XORing 3.1640625 with the previous value 3.25 (itself already has a lot of trailing zeros), we can get an XORed result $\Delta'$, which contains as many as 44 trailing zeros (only 2 before erasing as shown in Figure~\ref{fig:Motivation}(a)).

% 擦除方法面临着如下问题： 1、如何快速确定擦除位置？数据量大，要求快速；2、如何恢复？3、如何对擦除后的数据进行压缩？
There are three challenges for {\em Elf}. First, how to quickly determine the erased bits? Since there are a prohibitively large number of time series data generated at an unprecedented speed, it requires the erasing step to be as fast as possible. Second, how to losslessly restore the original floating-point data? This paper aims at lossless compression, but the erasing step would introduce some precision loss. It needs a restoring step to recover the original values from the erased ones. Third, how to compactly compress the erased floating point data? Since the distribution of trailing zeros has changed, it calls for a new XOR-based compressor for the erased values.

% 本文的解决方案。先描述大体思路，找到一个很小的delta，能够去掉很多尾随零，根据delta的限制，能够非常方便的恢复原来的值。1、通过理论分析，能够在O(1)的时间复杂度确定要擦除的位；2、解释恢复的大体过程；3、精心设计了一个针对拥有长尾的压缩器。描述图b整体流程。

Figure~\ref{fig:MainIdea}(a) shows the main idea of {\em Elf}. For this example, during the compressing process, we find a small value $\delta$ satisfying $0 < \delta < 0.01$ to erase the bits of 3.17 as many as possible. Therefore, we can obtain an erased value $3.1640625 = 3.17 - \delta$, and encode the XORed result $\Delta' = 3.1640624 \oplus 3.25$ using few bits. During the decompressing process, since we know $3.1640624 = \Delta' \oplus 3.25 = 3.17 - \delta$ and $0 < \delta < 0.01$, we can losslessly recover 3.17 from $\Delta'$ and 3.25 (i.e., $3.16\stkout{40625} + 0.01 = 3.17$). This paper proposes a mathematical method to find $\delta$ in a time complexity of $\mathcal{O}(1)$. Furthermore, we propose a novel XOR-based compressor to encode the XORed results containing many trailing zeros. As shown in Figure~\ref{fig:MainIdea}(b), {\em Elf} consists of Compressor and Decompressor, and works in a streaming fashion. In {\em Elf} Compressor, the original floating-point values $v_i$ flow into {\em Elf} Eraser and are transformed into $v'_i$ with many trailing zeros. Each $v'_i$ (except for $v'_1$) is XORed with its previous value $v'_{i - 1}$. The XORed result $\Delta'_i = v'_i \oplus v'_{i - 1}$ is finally encoded elaborately in {\em Elf} XOR$_{cmp}$. In {\em Elf} Decompressor, each $\Delta'_i$ (except for $\Delta'_1$) is streamed into {\em ELf} XOR$_{dcmp}$ and then XORed with $v'_{i - 1}$. Each $v'_i = \Delta'_i \oplus v'_{i-1}$ is finally fed into {\em Elf} Restorer to get the original value $v_i$. 

\setlength{\tabcolsep}{0.3em} % for the horizontal padding
\begin{table*}[t]
\caption{Symbols and Their Meanings}
\centering\label{tbl:symbols}
\resizebox{6.88in}{!}{
\begin{tabular}{|c|l|} 
\hline
\textbf{Symbols}			&\textbf{Meanings}	\\
\hline
\hline
$TS = \langle (t_1, v_1), (t_2, v_2), ... \rangle$ & Floating-point time series, where $t_i$ is a timestamp and $v_i$ is a floating-point value\\
\hline
$v$, $v'$				&Original floating-point value, erased floating-point value with long trailing zeros \\
\hline
$DF(v) = \pm(d_{h-1}d_{h-2}...d_0.d_{-1}d_{-2}...d_{l})_{10}$					& Decimal format of $v$, where $d_i \in \{1, 2, ..., 9\}$. ``+'' is usually omitted if $v > 0$ \\
\hline
$BF(v) = \pm(b_{\bar{h}-1}b_{\bar{h}-2}...b_0.b_{-1}b_{-2}...b_{\bar{l}})_2$				& Binary format of $v$, where $b_i \in \{1, 2\}$.  ``+'' is usually omitted if $v > 0$\\
\hline
$DP(v)$, $DS(v)$, $SP(v)$					& Decimal place count, decimal significand count, start decimal significand position of $v$\\
\hline
$s$, $\vec{\boldsymbol{e}} = \langle e_1, e_2, ..., e_{11}\rangle$, $\vec{\boldsymbol{m}} = \langle m_1, m_2, ..., m_{52}\rangle$ & Sign bit, exponent bits, mantissa bits under IEEE 754 format, where $s, e_i, m_j \in \{0, 1\}$\\
\hline
$e$, $\alpha$, $\beta$, $\beta^*$			& Decimal value of $\vec{\boldsymbol{e}}$, alias of $DP(v)$, alias of $DS(v)$, modified $\beta$\\
\hline
\end{tabular}
}
\end{table*}

% 本文的贡献。1、提出了一个基于erasing的浮点数无损压缩算法，通过抹去部分bit，增加尾随零，大大提高了压缩率；2、通过严格的理论分析，能够快速地找到合适的擦除位置，而且能够准确地恢复到原来的值。3、提出了一个针对尾随零很多的时序编码，进一步提升压缩率。4、通过充分的实验证明了有效性。
This paper is extended from our previous work~\cite{li2023elf}. To the best of our knowledge, we provide the first attempt for lossless floating-point compression based on the erasing strategy. In particular, we make the following contributions:

(1)~We propose an erasing-based lossless floating-point compression algorithm named {\em Elf}. {\em Elf} can greatly increase the number of trailing zeros in XORed results by erasing the last few bits, which enhances the compression ratio with a theoretical guarantee.

(2)~Through rigorous theoretical analysis, we can quickly determine the erased bits, and recover the original floating-point values without any precision loss. {\em Elf} takes only $\mathcal{O}(N)$ in time (where $N$ is the length of a time series) and $\mathcal{O}(1)$ in space.

(3)~We also propose an elaborated encoding strategy for the XORed results with many trailing zeros, which further improves the compression performance.

(4)~Observing that most values in a time series have the same significand count, we propose an upgraded version of {\em Elf} called {\em Elf}+ by optimizing the significand count encoding strategy, which further enhance the compression ratio and reduce the compression time.

(5)~We compare {\em Elf} and {\em Elf}+ with 9 state-of-the-art competitors (including 4 floating-point compression algorithms and 5 general compression algorithms) based on 22 datasets. The results show that {\em Elf} and its upgraded version {\em Elf}+ have the best compression ratio among all floating-point compression algorithms in most cases. For example, for double-precision floating-point values, {\em Elf} achieves an average relative compression ratio improvement of 12.4\% over Chimp$_{128}$ and 43.9\% over Gorilla , and {\em Elf}+ further enjoys an average relative improvement of 7.6\% over {\em Elf}. {\em Elf}+ even outperforms most of the compared general compression algorithms, and achieves similar performance to the best general one (i.e., Xz) in terms of compression ratio. However, {\em Elf}+ takes only about 3.86\% compression time and 10.57\% decompression time of Xz.

% 本文的结构。
In the rest of this paper, we give the preliminaries in Section~\ref{sec:preliminary}. In Section~\ref{sec:eraser}, we present the details of {\em Elf} Eraser and Restorer. In Section~\ref{sec:flagOptimization}, we describe the optimized  significand count encoding strategy in {\em Elf}+ Eraser and Restorer. In Section~\ref{sec:xorcompressor}, we elaborate on XOR$_{cmp}$ and XOR$_{dcmp}$. We give some analysis and discussion in Section~\ref{sec:discuss}, and extend the proposed algorithm from double-precision floating-point values to single-precision floating-point values in Section~\ref{sec:singleextend}. The experimental results are shown in Section~\ref{sec:exp}, followed by the related works in Section~\ref{sec:related}. We conclude this paper with future works in Section~\ref{sec:conclude}.

\section{Preliminaries}\label{sec:preliminary}

This section first gives some basic definitions, and then introduces the double-precision floating-point format of IEEE 754 Standard~\cite{kahan1996lecture}. Table~\ref{tbl:symbols} lists the symbols used frequently throughout this paper.

\subsection{Definitions}

\begin{mydef}
\textbf{Floating-Point Time Series.} A floa-ting-point time series $TS = \langle (t_1, v_1), (t_2, v_2), ... \rangle$ is a sequence of pairs ordered by the timestamps in an ascending order, where each pair $(t_i, v_i)$ represents that the floating-point value $v_i$ is recorded in timestamp $t_i$.
\end{mydef}

To compress floating-point time series compactly, one of the best ways is to compress the timestamps and floating-point values separately~\cite{liakos2022chimp, pelkonen2015gorilla, blalock2018sprintz}. For the timestamp compression, existing methods such as delta encoding and delta-of-delta encoding~\cite{pelkonen2015gorilla} can achieve rather good performance, but for the floating-point compression, there is still much room for improvement. To this end, this paper primarily focuses on the compression for floating-point values, particularly for double-precision floating-point values (abbr. \textbf{double values}) in time series (i.e., if not specified, the ``value'' refers to a double value). Single-precision floating-point compression is extended in Section~\ref{sec:singleextend}.

\begin{mydef}
\textbf{Decimal Format and Binary Format.} The decimal format of a double value $v$ is $DF(v) = \pm(d_{h-1}d_{h-2}...d_0.d_{-1}d_{-2}...d_{l})_{10}$, where $d_i \in \{0, 1, ..., 9\}$ for  $l \leq i \leq h-1$, $d_{h-1} \neq 0$ unless $h = 1$, and $d_l \neq 0$ unless $l = -1$. That is, $DF(v)$ would not start with ``0'' except that $h = 1$, and would not end with ``0'' except that $l = -1$. Similarly, the binary format of $v$ is $BF(v) = \pm(b_{\bar{h}-1}b_{\bar{h}-2}...b_0.b_{-1}b_{-2}...b_{\bar{l}})_2$, where $b_j \in \{0, 1\}$ for $\bar{l} \leq j \leq \bar{h}-1$. We have the following relation:
\begin{equation}\label{equ:formatrelation}
	v = \pm \sum_{i = l}^{h-1} d_{i} \times 10^i = \pm \sum_{j = \bar{l}}^{\bar{h}-1} b_{j} \times 2^j
\end{equation}
\end{mydef}

Here, ``$\pm$'' (which means ``$+$'' or ``$-$'') is the sign of $v$. If $v \geq 0$, ``$+$'' is usually omitted. For example, $DF(0) = (0.0)_{10}$, $DF(5.20) = (5.2)_{10}$, and $BF(-3.125)$ $= -(11.001)_2$.

\begin{mydef}\label{def:dpds}
\textbf{Decimal Place Count, Decimal Significand Count and Start Decimal Significand Position}. Given $v$ with its decimal format $DF(v) = \pm(d_{h-1}d_{h-2}...d_0.d_{-1}d_{-2}...d_{l})_{10}$, $DP(v) = |l|$ is called its decimal place count. If for all $l < n \leq i \leq h - 1$, $d_{i} = 0$ but $d_{n - 1} \neq 0$ (i.e., $d_{n-1}$ is the first digit that is not equal to $0$), $SP(v) = n - 1$ is called the start decimal significand position~\footnote{We have $SP(v) = \lfloor log_{10}|v| \rfloor$.}, and $DS(v) = n - l = SP(v) + 1 - l$ is called the decimal significand count. For the case of $v = 0$, we let $DS(v) = 0$ and $SP(v) = undefined$.
\end{mydef}

For example, $DP(3.14) = 2$, $DS(3.14) = 3$, and $SP(3.14) = 0$; $DP(-0.0314) = 4$, $DS(-0.0314) = 3$, and $SP(-0.0314) = -2$; $DP(314.0) = 1$, $DS(314.0) = 4$, and $SP(314.0) = 2$.

\subsection{IEEE 754 Floating-Point Format}\label{subsec:ieee754}

\begin{figure}[t]
  \centering
  \includegraphics[width=3.3in]{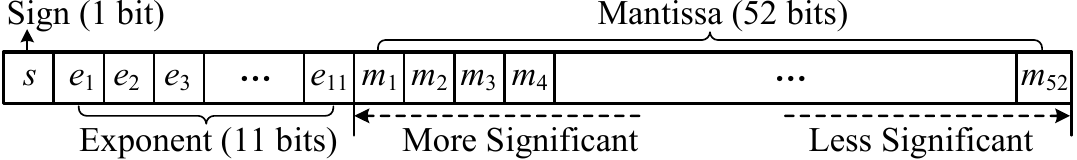}
  \caption{Double-Precision Floating-Point Format.}
  \label{fig:DoubleLayout}
\end{figure}

%介绍doule数据的几种类型，包括正规数、次正规数、0、NaN、无穷大

In accordance with IEEE 754 Standard~\cite{kahan1996lecture}, a double value $v$ is stored with 64 binary bits , where 1 bit is for the sign $s$, 11 bits for the exponent $\vec{\boldsymbol{e}} = \langle e_1, e_2, ..., e_{11}\rangle$, and 52 bits for the mantissa $\vec{\boldsymbol{m}} = \langle m_1, m_2, ..., m_{52}\rangle$, as shown in Figure~\ref{fig:DoubleLayout}. When $v$ is positive, $s = 0$, otherwise $s = 1$. According to the values of $\vec{\boldsymbol{e}}$ and $\vec{\boldsymbol{m}}$, a double value $v$ can be categorized into two main types: \textbf{normal numbers} and \textbf{special numbers}. As normal numbers are the most cases of time series, this paper mainly describes the proposed algorithm for normal numbers. However, our proposed algorithm can be easily extended to special numbers, which will be discussed in Section~\ref{subsec:specialNumbers}.
If $v$ is a normal number (or a normal), its value satisfies:
\begin{equation}\label{equ:normal}
\begin{aligned}
	v &= (-1)^s \times 2^{e - 1023} \times (1.m_1m_2...m_{52})_2\\
	  &= (-1)^s \times 2^{e - 1023} \times (1 + \sum_{i=1}^{52}m_i\times 2^{-i})
\end{aligned}
\end{equation}

\noindent where $e$ is the decimal value of $\vec{\boldsymbol{e}}$~\footnote{We also have $e = \lfloor log_2|v| \rfloor + 1023$.}, i.e., $e = \sum_{i=1}^{11}e_i\times 2^{11-i}$. If let $m_0 = 1$ and $BF(v) = (-1)^s(b_{\bar{h}-1}b_{\bar{h}-2}...b_0.$ $b_{-1}b_{-2}...b_{\bar{l}})_2$, we have:
\begin{equation}\label{equ:bmmap}
b_{-i} = m_{i + e - 1023}, i>0
\end{equation}

As shown in Figure~\ref{fig:DoubleLayout}, in the mantissa $\vec{\boldsymbol{m}} = \langle m_1, m_2,$ $..., m_{52} \rangle$ of a double value $v$, $m_i$ is more significant than $m_j$ for $1 \leq i < j \leq 52$, since $m_i$ contributes more to the value of $v$ than $m_j$.

\section{{\em Elf} Eraser and Restorer}\label{sec:eraser}

In this section, we introduce {\em Elf} Eraser and Restorer since they are strongly correlated.

\subsection{{\em Elf} Eraser}

The main idea of {\em Elf} compression is to erase some less significant mantissa bits (i.e., set them to zeros) of a double value $v$. As a result, $v$ itself and the XORed result of $v$ with its previous value are expected to have many trailing zeros. Note that $v$ and its opposite number $-v$ have the same double-precision floating-point formats except the different values of their signs. That is to say, the compression process for $-v$ can be converted into the one for $v$ if we reverse its sign bit only, and vice versa. To this end, in the rest of the paper, if not specified, we assume $v$ to be \textbf{positive} for the convenience of description. Before introducing the details of {\em Elf} Eraser, we first give the definition of mantissa prefix number.

\begin{mydef}
\textbf{Mantissa Prefix Number.} Given a \\double value $v$ with $\vec{\boldsymbol{m}} = \langle m_1, m_2, ..., m_{52} \rangle$, the double value $v'$ with $\vec{\boldsymbol{m'}} = \langle m'_1, m'_2, ..., m'_{52} \rangle$ is called the mantissa prefix number of $v$ if and only if there exists a number $n \in \{1, 2, ..., 51\}$ such that $m'_i = m_i$ for $1 \leq i \leq n$ and $m'_j = 0$ for $n + 1 \leq j \leq 52$, denoted as $v' = MPN(v, n)$.
\end{mydef}

For example, as shown in Figure~\ref{fig:ExampleOfMPN}, we give four mantissa prefix numbers of 3.17, i.e., $3.17 = MPN(3.17,$ $50)$, $3.169999837875366 = MPN(3.17, 23)$, $3.1640625 = MPN(3.17, 8)$ and $3.125 = MPN(3.17, 4)$.

\begin{figure}[t]
  \centering
  \includegraphics[width=3.3in]{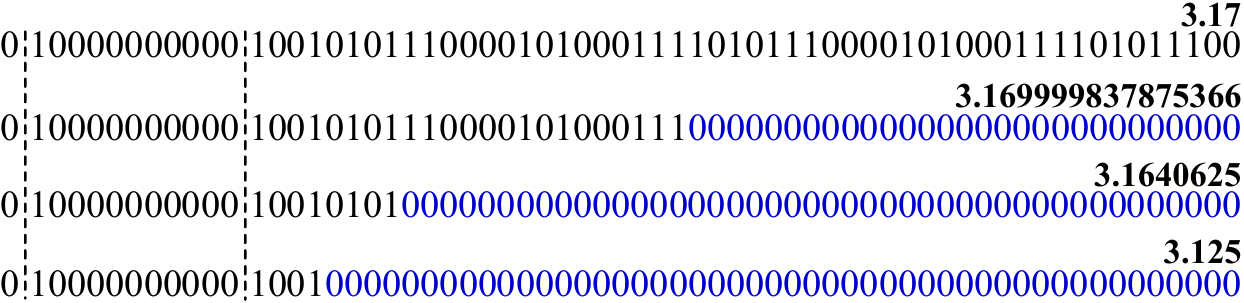}
  \caption{Examples of Mantissa Prefix Number.}
  \label{fig:ExampleOfMPN}
\end{figure}

\subsubsection{Observation}
Our proposed {\em Elf} compression algorithm is based on the following observation: given a double value $v$ with its decimal format $DF(v) = (d_{h-1}d_{h-2}...d_0.d_{-1}d_{-2}...d_l)_{10}$, we can find one of its mantissa prefix numbers $v'$ and a minor double value $\delta$, $0 \leq \delta < 10^l$, such that $v' = v - \delta$. If we retain the information of $v'$ and $\delta$, we can recover $v$ without losing any precision. 

On one hand, there could be many mantissa prefix numbers. Since we aim to maximize the number of trailing zeros of the XORed results, we should select the optimal mantissa prefix number that has the most trailing zeros. Considering the case of $v = 3.17$ shown in Figure~\ref{fig:ExampleOfMPN}, there are many satisfied pairs of $(v', \delta)$, e.g., $(3.17, 0)$, $(3.169999837875366, 0.000000162124634)$ and $(3.1640625, 0.0059375)$. As 3.1640625 has more trailing zeros than 3.169999837875366 and 3.17, the mantissa prefix number 3.1640625 is the most suitable $v'$. 

On the other hand, we find it even unnecessary to figure out and store $\delta$. If $\delta \neq 0$ (we will talk about the case when $\delta = 0$ in Section~\ref{subsubsec:deltazero}) and the decimal place count $DP(v)$ is known, we can easily recover $v$ from $v'$ losslessly. Suppose $\alpha = DP(v)$ and $DF(v') = (d_{h'-1}d_{h'-2}...d_0.d_{-1}d_{-2}...d_{l'})_{10}$, we have~\footnote{Equation~(\ref{equ:recover}) can be implemented by $v = RoundUp(v', \alpha)$, where $RoundUp(v', \alpha)$ is the operation to round $v'$ up to $\alpha$ decimal places.}: 
\begin{equation}\label{equ:recover}
	v = LeaveOut(v', \alpha) + 10^{-\alpha}
\end{equation}

\noindent where $LeaveOut(v', \alpha) = (d_{h'-1}d_{h'-2}...d_0.d_{-1}d_{-2}...d_{-\alpha}\\\stkout{d_{-(\alpha + 1)}...d_{l'}})_{10}$ is the operation that leaves out the digits after $d_{-\alpha}$ in $DF(v')$. For example, given $\alpha = DP(3.17)\\ = 2$ and $v' = 3.1640625$, we have $v = LeaveOut(v', \alpha) + 10^{-\alpha} = (3.16\stkout{40625})_{10} + 10^{-2} = 3.17$.

With the observation above, in the process of compression, what we should do is to find the most appropriate mantissa prefix number $v'$ of $v$ and record $\alpha = DP(v)$. During the decompression process, we can recover $v$ losslessly with the help of $v'$ and $\alpha$ according to Equation~(\ref{equ:recover}). However, there are still two problems left to be addressed. \textbf{Problem~\uppercase\expandafter{\romannumeral1}}: How to find the best mantissa prefix number $v'$ of $v$ with the minimum efforts? \textbf{Problem~\uppercase\expandafter{\romannumeral2}}: How to store the decimal place count $\alpha$ with the minimum storage cost?

\subsubsection{Mantissa Prefix Number Search}\label{subsubsec:mpnsearch}
% 如何快速找到v'?
To address Problem~\uppercase\expandafter{\romannumeral1}, one intuitive idea is to iteratively check all mantissa prefix numbers $v' = MPN(v, i)$ until $\delta = v - v'$ is greater than $10^{-\alpha}$, where $i$ is sequentially from $52$ to $1$. However, this intuitive idea is rather time-consuming since we need to verify the mantissa prefix numbers at most 52 times in the worst case. Although we can enhance the efficiency through a binary search strategy~\cite{bentley1975multidimensional}, the computation complexity $\mathcal{O}(log_252)$ is still high. To this end, we propose a novel mantissa prefix number search method which only takes $\mathcal{O}(1)$.

\begin{Theorem}\label{theorem:f1}
Given a double value $v$ with its decimal place count $DP(v) = \alpha$ and binary format $BF(v) = (b_{\bar{h}-1}b_{\bar{h}-2}...b_0.b_{-1}...b_{\bar{l}})_2$, $\delta = (0.0...0b_{-(f(\alpha) + 1)}b_{-(f(\alpha) + 2)}$ $...b_{\bar{l}})_2$ is smaller than $10^{-\alpha}$, where $f(\alpha) = \lceil |log_210^{-\alpha}|\rceil \\= \lceil \alpha \times log_210\rceil$.
\end{Theorem}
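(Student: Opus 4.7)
The plan is to bound $\delta$ directly using a geometric-series argument and then compare the resulting power of $2$ to $10^{-\alpha}$ via the definition of $f(\alpha)$. Since $\delta$ is obtained from $BF(v)$ by zeroing out the bits at positions $-1,-2,\ldots,-f(\alpha)$ and keeping only the tail, I can write
\begin{equation*}
\delta \;=\; \sum_{i=\bar{l}}^{-(f(\alpha)+1)} b_i\, 2^i
\end{equation*}
with each $b_i \in \{0,1\}$.

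First, I would upper-bound $\delta$ by replacing every $b_i$ with $1$ and summing the finite geometric series:
\begin{equation*}
\delta \;\leq\; \sum_{i=\bar{l}}^{-(f(\alpha)+1)} 2^i \;=\; 2^{-f(\alpha)} - 2^{\bar{l}} \;<\; 2^{-f(\alpha)},
\end{equation*}
where the strict inequality uses that $v$ is a normal double so $\bar{l}$ is finite and $2^{\bar{l}}>0$. Second, I would invoke the definition $f(\alpha)=\lceil \alpha\log_2 10\rceil$, which gives $f(\alpha)\geq \alpha\log_2 10$, hence
\begin{equation*}
2^{-f(\alpha)} \;\leq\; 2^{-\alpha\log_2 10} \;=\; 10^{-\alpha}.
\end{equation*}
Chaining the two inequalities yields $\delta < 2^{-f(\alpha)} \leq 10^{-\alpha}$, which is exactly the claim.

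The only subtle point is ensuring strictness of $\delta < 10^{-\alpha}$ rather than $\leq$. This is not really an obstacle: it is handed to us for free by the geometric-sum bound, since the finite sum of $2^i$ from $\bar{l}$ up to $-(f(\alpha)+1)$ is strictly less than $2^{-f(\alpha)}$ whenever $\bar{l}>-\infty$, which holds for any representable double. Everything else is a direct substitution from the definition of $f(\alpha)$, so I would expect the write-up to be short — essentially the two displayed inequalities above, preceded by a one-line unpacking of what $\delta$ means in terms of the bits of $v$.
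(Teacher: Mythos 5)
Your proposal is correct and follows essentially the same route as the paper: bound every retained bit by $1$, sum the geometric series to get $\delta < 2^{-f(\alpha)}$, and then use $f(\alpha) = \lceil \alpha \log_2 10\rceil \geq \alpha\log_2 10$ to conclude $2^{-f(\alpha)} \leq 10^{-\alpha}$. The only (cosmetic) difference is how strictness is obtained — you evaluate the finite sum exactly as $2^{-f(\alpha)} - 2^{\bar{l}}$, while the paper compares the finite sum to the infinite geometric series $\sum_{i=f(\alpha)+1}^{+\infty} 2^{-i} = 2^{-f(\alpha)}$.
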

\begin{proof}
$
\begin{aligned}
\delta &= \sum_{i = f(\alpha) + 1}^{|\bar{l}|} b_{-i} \times 2^{-i} \leq \sum_{i = f(\alpha) + 1}^{|\bar{l}|} 2^{-i} < \sum_{i = f(\alpha) + 1}^{+\infty} 2^{-i}\\
&=2^{-f(\alpha)} = 2^{-\lceil \alpha \times log_210\rceil} \leq 2^{- \alpha \times log_210}\\
&=(2^{log_210})^{-\alpha} = 10^{-\alpha}
\end{aligned}
$
\end{proof}

Here, $f(\alpha) = \lceil |log_210^{-\alpha}|\rceil$ means that the decimal value $10^{-\alpha}$ requires exactly $\lceil |log_210^{-\alpha}|\rceil$ binary bits to represent. Suppose $\delta$ is obtained based on Theorem~\ref{theorem:f1}, $v - \delta$ can be regarded as erasing the bits after $b_{-f(\alpha)}$ in $v$'s binary format. Recall that for any $b_{-i}$ in $BF(v)$ where $i > 0$, we can find a corresponding $m_{i + e - 1023}$ according to Equation~(\ref{equ:bmmap}). Consequently, $v-\delta$ can be further deemed as erasing the mantissa bits after $m_{g(\alpha)}$ in $v$'s underlying floating-point format, in which $g(\alpha)$ is defined as:
\begin{equation}\label{equ:g1}
	g(\alpha) = f(\alpha) + e - 1023 = \lceil \alpha \times log_210\rceil + e - 1023
\end{equation}

\noindent where $\alpha = DP(v)$ and $e = (e_1e_2...e_{11})_2 = \sum_{i=1}^{11}e_i\times 2^{11-i}$.

As a result, we can directly calculate the best mantissa prefix number $v'$ by simply erasing the mantissa bits after $m_{g(\alpha)}$ of $v$, which  takes only $\mathcal{O}(1)$.

% 将alpha转化成beta，因为alpha可能会很大。
\subsubsection{Decimal Place Count Calculation}
To solve Problem~\uppercase\expandafter{\romannumeral2}, the basic idea is to utilize $\lceil log_2\alpha_{max} \rceil$ bits for $\alpha$ storage, where $\alpha_{max}$ is the possible maximum value of a decimal place count. According to~\cite{kahan1996lecture}, the minimum value of the double-precision floating-point number is about $4.9 \times 10^{-324}$, so $\alpha_{max} = 324$ and $\lceil log_2\alpha_{max} \rceil = 9$, i.e., the basic method needs as many as 9 bits to store $\alpha$ during the compression process for each double value, which results in a large storage cost and low compression ratio.

Given a double value $v$ with its decimal format $DF(v) \\=(d_{h-1}d_{h-2}...d_0.d_{-1}d_{-2}...d_{l})_{10}$, we notice that its decimal place count $\alpha = DP(v)$ can be calculated by the decimal significand count $\beta = DS(v)$. Since the decimal significand count $\beta$ of a double value would not be greater than 17 under the IEEE 754 Standard~\cite{kahan1996lecture, liakos2022chimp}, it requires much fewer bits to store $\beta$. According to Definition~\ref{def:dpds}, we have $\alpha = DP(v) = |l| = -l$ and $\beta = DS(v) = SP(v) + 1 - l$, so we have:
\begin{equation}\label{equ:beta2alpha}
	\alpha = \beta - (SP(v) + 1)
\end{equation}

Next, we discuss how to get $SP(v)$ without even knowing $v$.

\begin{Theorem}\label{theorem:beta2alpha}
Given a double value $v$ and its best mantissa prefix number $v'$, if $v \neq 10^{-i}$, $i > 0$, then $SP(v) = SP(v')$.
\end{Theorem}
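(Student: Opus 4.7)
The plan is to combine the bound $v - 10^{-\alpha} < v' \leq v$ (with $\alpha = DP(v)$) from Theorem~\ref{theorem:f1} with the footnote identity $SP(x) = \lfloor \log_{10}|x|\rfloor$. Assume $v > 0$ without loss of generality, and set $k = SP(v)$, so that $10^k \leq v < 10^{k+1}$. The inequality $v' \leq v < 10^{k+1}$ immediately gives $SP(v') \leq k$, reducing the whole proof to establishing the reverse inequality $v' \geq 10^k$.

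I would then split on whether $v > 10^k$ strictly, or $v = 10^k$. In the strict case, the structural remark to lean on is that $DP(v) = \alpha$ means $v = N \cdot 10^{-\alpha}$ for a positive integer $N$. A quick check shows that $SP(v) = k$ forces $k + \alpha \geq 0$, because the smallest positive value with decimal place count $\alpha$ is $10^{-\alpha}$, whose $SP$ is $-\alpha$. Hence the inequality $N > 10^{k+\alpha}$ is between nonnegative integers, giving $N \geq 10^{k+\alpha} + 1$ and therefore $v \geq 10^k + 10^{-\alpha}$. Combined with $v' > v - 10^{-\alpha}$ from Theorem~\ref{theorem:f1}, this yields $v' > 10^k$, as required.

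In the boundary case $v = 10^k$, the hypothesis $v \neq 10^{-i}$ for $i > 0$ rules out $k < 0$: for $k < 0$ one has $DP(10^k) = -k$, so $v = 10^k$ would be precisely $10^{-i}$ with $i = -k > 0$. Thus $k \geq 0$, and $v = 2^k \cdot 5^k$ is exactly representable with $\alpha = 1$ and $e - 1023 = \lfloor k \log_2 10 \rfloor$. Since $5^k$ is odd, the last nonzero bit of $v$ in binary sits at binary position $k$, i.e.\ at mantissa index $\lfloor k \log_2 10 \rfloor - k$, which is at most $g(1) = 4 + \lfloor k \log_2 10 \rfloor$ since $-k \leq 4$ for $k \geq 0$. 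Hence erasing everything after $m_{g(1)}$ is a no-op, $v' = v = 10^k$, and $SP(v') = k$ trivially.

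The main obstacle is the boundary case: one has to verify that the cutoff $g(1)$ always sits at or past the last nonzero mantissa bit of $10^k$, which is exactly where the $+4$ slack contributed by $f(1) = \lceil \log_2 10 \rceil = 4$ pays off and where excluding $v = 10^{-i}$ ($i > 0$) is essential. The strict case $v > 10^k$, by contrast, reduces to a clean integer-arithmetic bound once $v$ is expressed as an integer multiple of $10^{-\alpha}$.
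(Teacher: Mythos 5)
Your proof is correct, and it takes a genuinely different route from the paper's. The paper argues on decimal digit strings: writing $v' = v - \delta$ with $0 \le \delta < 10^{-\alpha}$ (Theorem~\ref{theorem:f1}), it carries out the subtraction in vertical form, observes that the borrow never propagates past the last digit (so $d'_i = d_i$ for $-(\alpha-1) \le i \le h-1$ and $d'_{-\alpha} = d_{-\alpha}-1$), and then splits on whether $SP(v) = -\alpha$ --- where the hypothesis $v \neq 10^{-i}$ enters as ``$d_{-\alpha} \neq 1$, so the decremented digit stays nonzero'' --- or $SP(v) \neq -\alpha$, where the untouched higher digits pin down $SP$. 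You instead work with inequalities: using $SP(x) = \lfloor \log_{10} x \rfloor$ you reduce everything to $v' \ge 10^k$, settle the generic case $v > 10^k$ by integrality of $10^{\alpha}v$ (so $v \ge 10^k + 10^{-\alpha}$, hence $v' > v - 10^{-\alpha} \ge 10^k$), and settle the boundary case $v = 10^k$ by showing the eraser is a no-op on nonnegative powers of ten, the hypothesis $v \neq 10^{-i}$ entering only to force $k \ge 0$. The trade-off: the paper's picture is shorter and uniform across cases, but it silently assumes $d_{-\alpha} \neq 0$ whenever $\delta \neq 0$ (i.e., that integers, for which $d_{-1}=0$ and $\alpha = 1$ by the paper's convention, never acquire a nonzero $\delta$) --- exactly the kind of fact your boundary case proves explicitly. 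Your version replaces informal borrow-reasoning with integer arithmetic, makes visible where the slack in $f(1) = \lceil \log_2 10 \rceil = 4$ is spent, and your no-erasure analysis of $10^k$ for $k \ge 0$ is a nice counterpoint to the paper's Theorem~\ref{theorem:beta2alpha2}, which shows that in the excluded case $k<0$ the entire mantissa is erased and $SP$ really does drop by one.
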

\begin{proof}
Suppose $\alpha = DP(v)$ and $v' = v - \delta$, where $0 \leq \delta < 10^{-\alpha}$.

If $\delta = 0$, i.e., $v = v'$, $DF(v)$ and $DF(v')$ undoubtedly have the same start decimal significand position.

\begin{figure}[t]
  \centering
  \includegraphics[width=3.3in]{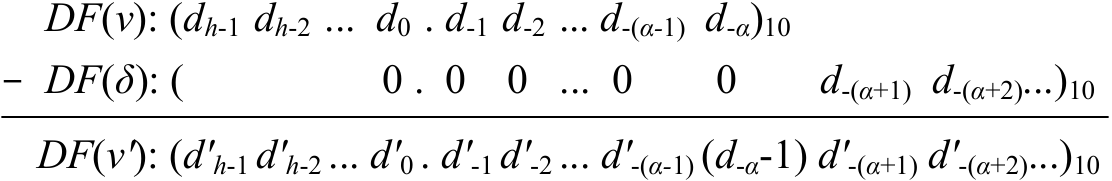}
  \caption{Subtraction in Vertical Form.}
  \label{fig:VerticalFormSubtraction}
\end{figure}

If $\delta \neq 0$, we let $DF(v) = (d_{h-1}d_{h-2}...d_0.d_{-1}...d_{-\alpha})_{10}$, $DF(\delta) = (0.0...0d_{-(\alpha + 1)}d_{-(\alpha + 2)}...)_{10}$ and $DF(v') = (d'_{h-1}d'_{h-2}...d'_0.d'_{-1}d'_{-2}...d'_{-\alpha}...)_{10}$. Figure~\ref{fig:VerticalFormSubtraction} shows the vertical form of the calculation for $v' = v - \delta$, from which we can clearly conclude that $d_i = d'_i$ for $-(\alpha - 1) \leq i \leq h - 1$, and that $d'_{-\alpha} = d_{-\alpha} - 1$. There are two cases: $SP(v) = -\alpha$ and $SP(v) \neq -\alpha$. For the former, we have $d_i = 0$ for $-(\alpha - 1) \leq i \leq h - 1$ and $d_{-\alpha} \neq 0$ according to the definition of the start decimal significand position. Since $v \neq 10^{-i}$, i.e., $d_{-\alpha} \neq 1$, we have $d'_{-\alpha} = d_{-\alpha} - 1 \neq 0$, i.e., $SP(v') = -\alpha = SP(v)$. For the latter, as $v \neq 0$ and $SP(v) \neq -\alpha$, there must exist $j \in \{h-1, h-2, ..., -(\alpha - 1)\}$ such that $d_j \neq 0$. Suppose $d_{j^*}$ is the first one for $d_{j} \neq 0$, i.e., $SP(v) = j^*$. Because $d'_i = d_i$ for $-(\alpha - 1) \leq  i \leq h - 1$, $d'_{j^*}$ is also the first one for $d'_{j} \neq 0$, i.e., $SP(v') = j^* = SP(v)$.
\end{proof}

\begin{figure*}[t]
  \centering
  \includegraphics[width=6.88in]{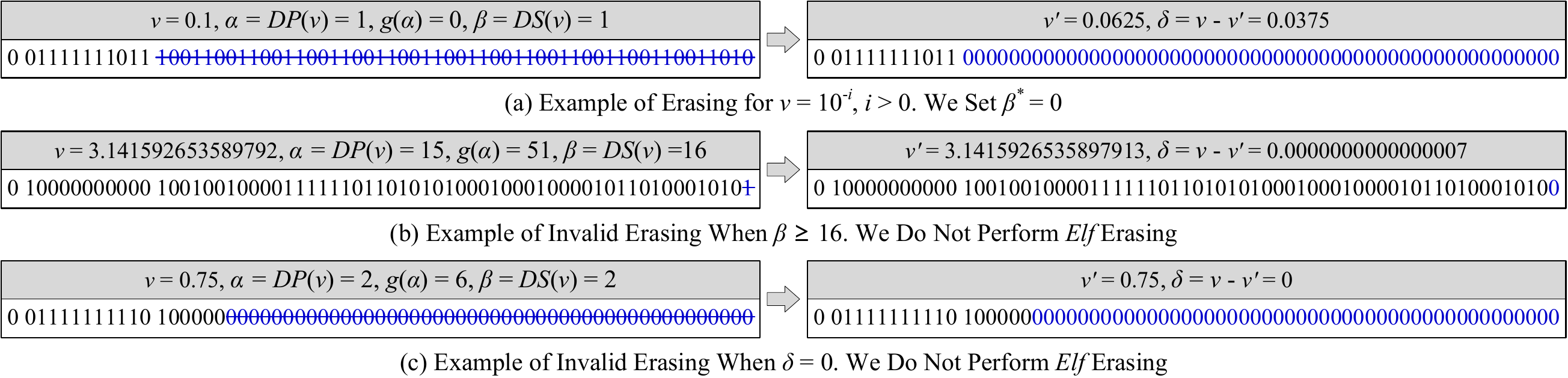}
  \caption{Corner Cases of {\em Elf} Eraser.}
  \label{fig:ExampleOfCornerCase}
\end{figure*}

When $v = 10^{-i}$, $i > 0$, Theorem~\ref{theorem:beta2alpha} does not hold. Figure~\ref{fig:ExampleOfCornerCase}(a) gives an example of $v = 0.1$ with $SP(v) = -1$. If performing the erasing operation on $v$, we get $v'=0.0625$ with $SP(v') = -2$. 

\begin{Theorem}\label{theorem:beta2alpha2}
Given a double value $v = 10^{-i}$, $i > 0$, and its best mantissa prefix number $v'$, we have $SP(v) = SP(v') + 1$.
\end{Theorem}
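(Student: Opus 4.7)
The plan is to exploit the very special binary representation of $v = 10^{-i}$ (for $i > 0$) to pin down the best mantissa prefix number $v'$ in closed form, and then sandwich $v'$ between two consecutive powers of ten to read off $SP(v')$.

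First, I would gather the parameters for $v = 10^{-i}$: clearly $DP(v) = \alpha = i$ and $SP(v) = -i$. From the footnote in Section~\ref{subsec:ieee754}, $e - 1023 = \lfloor \log_2 v\rfloor = \lfloor -i\log_2 10\rfloor = -\lceil i\log_2 10\rceil = -f(\alpha)$, where the middle equality uses the irrationality of $\log_2 10$ so that $i\log_2 10$ is never an integer. Consequently $g(\alpha) = f(\alpha) + (e - 1023) = 0$, meaning that the erasing rule of Theorem~\ref{theorem:f1} removes every explicit mantissa bit. Since $v < 1$, the only set binary bit that survives is the implicit leading $1$ at position $-f(\alpha)$, so the best mantissa prefix number is
\[
  v' \;=\; 2^{\,e-1023} \;=\; 2^{-f(\alpha)} \;=\; 2^{-\lceil i\log_2 10\rceil}.
\]

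Next I would bracket $v'$ strictly between $10^{-i-1}$ and $10^{-i}$. For the upper bound, $f(\alpha) > i\log_2 10$ (strict, by irrationality) gives $v' = 2^{-f(\alpha)} < 2^{-i\log_2 10} = 10^{-i}$. For the lower bound, $f(\alpha) \le i\log_2 10 + 1 < (i+1)\log_2 10$ (since $\log_2 10 > 1$), hence $v' > 2^{-(i+1)\log_2 10} = 10^{-i-1}$. By the definition of the start decimal significand position, $10^{-i-1} < v' < 10^{-i}$ forces $SP(v') = -i-1 = SP(v) - 1$, i.e.\ $SP(v) = SP(v') + 1$, which is the claim.

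The main obstacle I anticipate is the identification $v' = 2^{-f(\alpha)}$. The case $g(\alpha) = 0$ sits at the boundary of the literal domain $n \in \{1,\ldots,51\}$ of the mantissa-prefix-number definition, so a brief justification is needed that the erasing rule of Theorem~\ref{theorem:f1} together with the ``maximum trailing zeros'' criterion still picks out the single-term value $2^{e-1023}$ (essentially: among all doubles $v'$ satisfying $0 < v' \le v$ and $v - v' < 10^{-i}$, the one with most trailing zeros is the largest power of two not exceeding $v$, which is exactly $2^{-\lceil i\log_2 10\rceil}$). Once this identification is in place, the remainder of the proof is a one-line pair of logarithmic inequalities.
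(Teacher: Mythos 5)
Your proof is correct and follows essentially the same route as the paper's: both establish $g(\alpha)=0$ via $\lceil \alpha\log_2 10\rceil + \lfloor -\alpha\log_2 10\rfloor = 0$, identify $v'$ as the surviving power of two $2^{\lfloor \log_2 v\rfloor} = 2^{-\lceil i\log_2 10\rceil}$, and then sandwich $v'$ strictly below $10^{-i}$ to conclude $SP(v') = SP(v) - 1$. The only cosmetic difference is that the paper bounds the ratio $v \div v' \in (1,2)$ to obtain $v' \in (0.5\times 10^{-\alpha},\, 10^{-\alpha})$, whereas you bound $v'$ directly between $10^{-i-1}$ and $10^{-i}$; the two intervals give the same conclusion.
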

\begin{proof}
Suppose $\alpha = DP(v)$, we have $\alpha > 0$ and $v = 10^{-\alpha}$. The exponent value of the $v$'s underlying storage is $e = \lfloor log_2|v| \rfloor + 1023 = \lfloor -\alpha \times log_210 \rfloor + 1023$. Based on Equation~(\ref{equ:g1}), we have $g(\alpha) = \lceil \alpha \times log_210 \rceil + \lfloor -\alpha \times log_210 \rfloor = 0$. That is, we will erase all of the mantissa bits, so $v' = (-1)^s \times 2^{\lfloor log_2|v| \rfloor} = 2^{\lfloor log_210^{-\alpha} \rfloor}$. Let $v \div v' = 10^{-\alpha} \div 2^{\lfloor log_210^{-\alpha} \rfloor} = 2^{log_210^{-\alpha}} \div 2^{\lfloor log_210^{-\alpha} \rfloor} = 2^{log_210^{-\alpha} - \lfloor log_210^{-\alpha} \rfloor}$. Since $log_210^{-\alpha} - \lfloor log_210^{-\alpha} \rfloor \in (0, 1)$, we have $v \div v' \in (1, 2)$. Further $v' \in (0.5 \times 10^{-\alpha}, 10^{-\alpha})$. Consequently, $SP(v) = SP(v') + 1$.
\end{proof}

\begin{algorithm}[t]
\small
	\caption{$ElfEraser(v, out)$}\label{alg:elferaser}
	$\alpha \leftarrow DP(v), \beta^{*} \leftarrow DS^*(v)$;\label{alg:elferaser:initial} \tcp{Equation~(\ref{equ:betastar})}
	$\delta \leftarrow$ $\sim\!({\rm 0xffffffffffffffffL}<< (52 - g(\alpha)))\ \&\ v$\;\label{alg:elferaser:delta}
	\If(\tcp*[h]{perform erasing}){$\beta^*<16$ and $\delta\neq0$ and $52-g(\alpha)>4$\label{alg:elferaser:erasecondition}}{
		$out.writeBit(``1"); out.write(\beta^*, 4)$\;\label{alg:elferaser:flagtrue:start}
		$v' \leftarrow ({\rm 0xffffffffffffffffL}<< (52 - g(\alpha)))\ \&\ v$\;\label{alg:elferaser:flagtrue:end}
	}
	\Else(\tcp*[h]{do not perform erasing}){
		$out.writeBit(``0"); v' \leftarrow v$\;\label{alg:elferaser:flagfalse}
	}
	$XOR_{cmp}(v', out)$\;\label{alg:elferaser:xor} 
\end{algorithm}

According to Theorem~\ref{theorem:beta2alpha} and Theorem~\ref{theorem:beta2alpha2}, Equation~(\ref{equ:beta2alpha}) can be rewritten as:
\begin{equation}\label{equ:beta2alpha2}
\alpha = \left\{  
	\begin{array}{ll}
		\beta - (SP(v') + 1) \quad\quad		v \neq 10^{-i}, i>0\\
		\beta - (SP(v') + 2) \quad\quad		v = 10^{-i}, i>0
	\end{array}
\right.
\end{equation}

For any normal number $v$, its decimal significand count $\beta$ will not be zero. Besides, if we know $v = 10^{SP(v)}$, $SP(v) < 0$, we can easily get $v$ from $v'$ by the following equation:
\begin{equation}\label{equ:recover1}
v = 10^{SP(v') + 1}
\end{equation}

\noindent To this end, we can record a modified decimal significand count $\beta^*$ for the calculation of $\alpha$.
\begin{equation}\label{equ:betastar}
\beta^*= DS^*(v)=
\left\{  
	\begin{array}{ll}
		0 				\quad\quad	v = 10^{-i}, i>0\\
		\beta			\quad\quad	others
	\end{array}
\right.
\end{equation}

Although there are 18 possible different values of $\beta^*$, i.e., $\beta^* \in \{0, 1, 2, ..., 17\}$, we do not consider the situations when $\beta^* = 16$ or $17$, because for these two situations, we can only erase a small number of bits but need more bits to record $\beta^*$, which leads to a negative gain (more details will be discussed in Section~\ref{subsec:effectiveness}). For example, as shown in Figure~\ref{fig:ExampleOfCornerCase}(b), given $v = 3.1415926535\\89792$ with $\beta = 16$, we can erase one bit only. In our implementation, we leverage 4 bits to record $\beta^*$ for $0 \leq \beta^* \leq 15$. To ensure a positive gain, when $52 - g(\alpha) \leq 4$, we do not perform the erasing operation.

\subsubsection{When $\delta$ is Zero}\label{subsubsec:deltazero}

As shown in Figure~\ref{fig:ExampleOfCornerCase}(c), given $v = 0.75$, we get $v' = v$ and $\delta = 0$. In this situation, we cannot recover $v$ from $v'$ according to Equation~(\ref{equ:recover}). In fact, $\delta = 0$ indicates that $v$ itself has long trailing zeros. Therefore, once $\delta = 0$, we will keep $v$ as it is.

\subsubsection{Summary of {\em Elf} Eraser}

{\em Elf} Eraser~\cite{li2023elf} utilizes one bit to indicate whether we have erased $v$ or not. As shown in Algorithm~\ref{alg:elferaser}, it takes as input a double value $v$ and an output stream $out$. 

We first calculate the decimal place count $\alpha$ and modified decimal significand count $\beta^*$ based on Equation~(\ref{equ:betastar}), and get $\delta$ by extracting the least $52 - g(\alpha)$ significant mantissa bits of $v$ (Lines~\ref{alg:elferaser:initial}-\ref{alg:elferaser:delta}). 

If the three conditions (i.e., $\beta^* < 16$, $\delta \neq 0$ and $52-g(\alpha) > 4$) hold simultaneously, the output stream $out$ writes one bit of ``1'' to indicate that $v$ should be transformed, followed by 4 bits of $\beta^*$ for the recovery of $v$. We get $v'$ by erasing the least $52 - g(\alpha)$ significant mantissa bits of $v$ (Lines~\ref{alg:elferaser:flagtrue:start}-\ref{alg:elferaser:flagtrue:end}). Otherwise, the output stream $out$ writes one bit of ``0'', and $v'$ is assigned $v$ without any modification (Line~\ref{alg:elferaser:flagfalse}). 

Finally, the obtained $v'$ is passed to an XOR-based compressor together with $out$ for further compression (Line~\ref{alg:elferaser:xor}).

\begin{algorithm}[t]
\small
	\caption{$ElfRestorer(in)$}\label{alg:elfrestorer}
	$flag \leftarrow in.read(1)$\;\label{alg:elfrestorer:readflag}	
	\If(\tcp*[h]{no restoration required}){$flag = 0$}{
		$v \leftarrow XOR_{dcmp}(in)$\;\label{alg:elfrestorer:setsame}
	}
	\Else(\tcp*[h]{perform restoring}){
		$\beta^* \leftarrow in.read(4)$; $v' \leftarrow XOR_{dcmp}(in)$\;\label{alg:elfrestorer:xordcmp}
		\If{$\beta^* = 0$}{
			$v \leftarrow 10^{SP(v') + 1}$;\label{alg:elfrestorer:beta0} \tcp{Equation~(\ref{equ:recover1})}
		}
		\Else{
			$\alpha \leftarrow \beta^* - (SP(v') + 1)$;\label{alg:elfrestorer:recover:start} \tcp{Equation~(\ref{equ:beta2alpha2})}
			$v \leftarrow LeaveOut(v', \alpha) + 10^{-\alpha}$;\label{alg:elfrestorer:recover:end} \tcp{Equation~(\ref{equ:recover})}
		}
	}
	\Return{v}\;\label{alg:elfrestorer:recover:return}
\end{algorithm}

\begin{figure*}[t]
  \centering
  \includegraphics[width=6.0in]{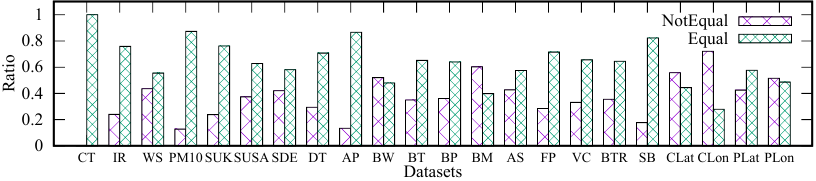}
  \caption{Equal Modified Significant Counts VS Unequal Modified Significant Counts of Two Consecutive Values.}
  \label{fig:BetaStarDistribution}
\end{figure*}

\subsection{{\em Elf} Restorer}

{\em Elf} Restorer is an inverse process of {\em Elf} Eraser. Algorithm~\ref{alg:elfrestorer} depicts the pseudo-code of $Elf$ Restorer~\cite{li2023elf}, which takes in an input stream $in$. First, we read one bit from the input stream $in$ to get the modification flag $flag$ (Line~\ref{alg:elfrestorer:readflag}), which has two cases:

(1)~If $flag$ equals to 0, it means that we have not modified the original value, so we get a value from the XOR-based decompressor and assign it to $v$ directly (Line~\ref{alg:elfrestorer:setsame}). 

(2)~Otherwise, we read 4 bits from $in$ to get the modified decimal significand count $\beta^*$, and then get a value $v'$ from an XOR-based decompressor. If $\beta^*$ equals to 0, $v$ has a format of $10^{-i}$, where $-i = SP(v') + 1$ (Line~\ref{alg:elfrestorer:beta0}). If $\beta^* \neq 0$, we can recover $v$ from $\beta^*$ and $v'$ based on Equation~(\ref{equ:beta2alpha2}) and Equation~(\ref{equ:recover}) (Lines~\ref{alg:elfrestorer:recover:start}-\ref{alg:elfrestorer:recover:end}). 

Finally, the recovered $v$ is returned (Line~\ref{alg:elfrestorer:recover:return}).

\section{{\em Elf}+ Eraser and Restorer}\label{sec:flagOptimization}

In this section, we propose to optimize the significand count encoding strategy, which introduces {\em Elf}+ Eraser and its corresponding Restorer.

\subsection{Observation}

We observe that the values in a time series usually have similar significand counts; therefore, their modified significand counts are also similar (we may interchange the terms of significand count and modified significand count in the following of this section). In Algorithm~\ref{alg:elferaser}, if a value $v$ is to be erased, we always use four bits to record its $\beta^*$, which is not quite effective. One possible method is to record a global $\beta^*_{g}$ of a time series, so the significand count $\beta^*$ of each value $v$ can be represented by $\beta^*_{g}$. However, this method has several drawbacks. First, it requires to know the global significand count before compressing a time series, but this usually cannot be achieved in streaming scenarios. Note that the significand counts of values in a time series are not always the same, so selecting an appropriate $\beta^*_{g}$ is not easy. Second, using $\beta^*_{g}$ to stand for $\beta^*$ might lead to insufficient compression when $\beta^* < \beta^*_g$, or lossy compression when $\beta^* > \beta^*_g$.

To this end, this paper proposes to make the utmost of the modified significand count of the previous one value, which is not only suitable for streaming scenarios and adaptive to dynamic significand counts, but also retains the characteristics of lossless compression. The intuition behind this is that the modified significand count of each value in a time series is likely to be exactly the same as that of the previous value. Figure~\ref{fig:BetaStarDistribution} presents the ratio of equal cases and unequal cases of two consecutive values' modified sigfinicand counts in 22 datasets (for more details please see Section~\ref{sec:exp}) respectively, from which we can see that the equal cases are far more than unequal cases for almost all datasets.

\begin{figure*}[t]
  \centering
  \includegraphics[width=6.8in]{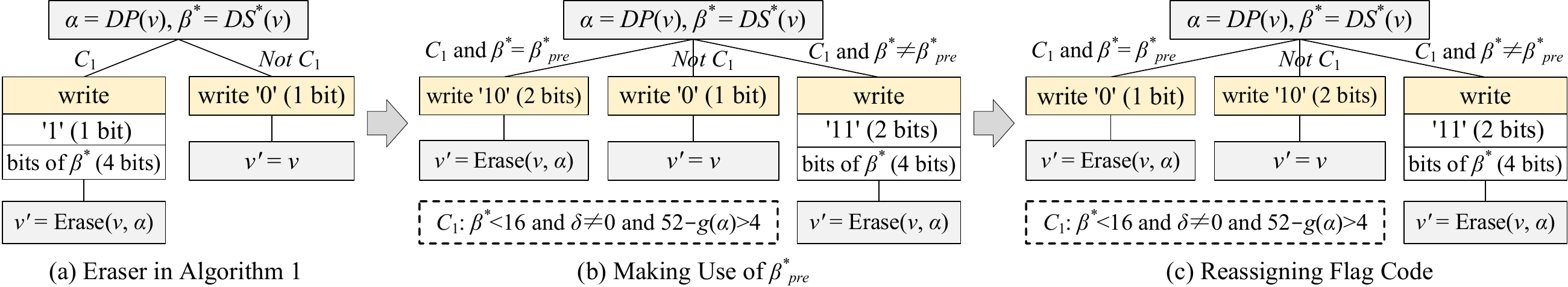}
  \caption{Evolutionary Process of {\em Elf}+ Eraser.}
  \label{fig:ElfEraser}
\end{figure*}

\subsection{{\em Elf}+ Eraser}
We can optimize the modified significand count encoding as follows. If the three conditions (i.e., $C_1$: $\beta^* < 16$, $\delta \neq 0$ and $52-g(\alpha) > 4$) in Algorithm~\ref{alg:elferaser} hold simultaneously, we further check whether the modified significand count $\beta^*$ of the current value is equal to that of the previous one $\beta^*_{pre}$. If $\beta^* = \beta^*_{pre}$, instead of writing the value of $\beta^*$ with 4 bits, we write only one bit of `0', because we can recover $\beta^*$ from $\beta^*_{pre}$, which saves 3 bits. If $\beta^* \neq \beta^*_{pre}$, we would write one more bit of `1' followed by 4 bits of $\beta^*$. As a result, the eraser in Algorithm~\ref{alg:elferaser} (shown in Figure~\ref{fig:ElfEraser}(a)) is converted into the eraser shown in Figure~\ref{fig:ElfEraser}(b). Suppose the ratio of equal cases in a time series is $r_e$. Let $r_e \times 3 - (1 - r_e) > 0$, we have $r_e > 0.25$. That is, if the ratio  of equal cases is greater than 0.25, we can always guarantee a positive gain through the above optimization.

We also notice that the case of ``$C_1$ and $\beta^* = \beta^*_{pre}$'' has the largest proportion among the three cases in Figure~\ref{fig:ElfEraser}(b) for almost all datasets, but we use 2 bits (i.e., `10') to represent this case. According to the coding theory~\cite{huffman1952method}, more frequent cases are encoded with fewer bits. Therefore, we propose to switch the flag codes (i.e., `10' and `0') of case ``$C_1$ and $\beta^* = \beta^*_{pre}$'' and case ``Not $C_1$'' in Figure~\ref{fig:ElfEraser}(b). Finally, the eraser is transformed into the one shown in Figure~\ref{fig:ElfEraser}(c).

\begin{algorithm}[t]
\small
	\caption{$ElfPlusEraser(v, out)$}\label{alg:elferaserplus}
	$\alpha \leftarrow DP(v), \beta^{*} \leftarrow DS^*(v)$; \tcp{Equation~(\ref{equ:betastar})}
	$\delta \leftarrow$ $\sim\!({\rm 0xffffffffffffffffL}<< (52 - g(\alpha)))\ \&\ v$\;\label{alg:elferaserplus:delta}
	\If{$\beta^*<16$ and $\delta\neq0$ and $52-g(\alpha)>4$\label{alg:elferaserplus:erasecondition}}{
		\If{$\beta^* = \beta^*_{pre}$\label{alg:elferaserplus:erase:start}} {
			$out.writeBit(``0")$\;
		}
		\Else{
			$out.writeBit(``11"); out.write(\beta^*, 4)$\;\label{alg:eraserplus:writebeta}
			$\beta^*_{pre} \leftarrow \beta^*$\;\label{alg:elferaserplus:betapre}
		}		
		$v' \leftarrow ({\rm 0xffffffffffffffffL}<< (52 - g(\alpha)))\ \&\ v$\;\label{alg:elferaserplus:erase:end}
	}
	\Else{
		$out.writeBit(``10"); v' \leftarrow v$\;\label{alg:elferaserplus:noterase}
	}
	$XOR_{cmp}(v', out)$\;
\end{algorithm}

Algorithm~\ref{alg:elferaserplus} presents {\em Elf}+ Eraser, which is similar to Algorithm~\ref{alg:elferaser} except two aspects. (1)~We further check if $\beta^* = \beta^*_{pre}$ when $v$ is to be erased (Lines \ref{alg:elferaserplus:erase:start}-\ref{alg:elferaserplus:erase:end}). If $\beta^* = \beta^*_{pre}$, we only write one bit of `0'. Otherwise, we write two bits of `11' and four bits of $\beta^*$. Moreover, we assign $\beta^*$ to $\beta^*_{pre}$ for the compression of the next value (Line~\ref{alg:elferaserplus:betapre}). (2)~The flag codes are different from those in Algorithm~\ref{alg:elferaser}. For example, in Algorithm~\ref{alg:elferaser}, we use one bit of `0' to indicate the case that $v$ would not be erased, but in Algorithm~\ref{alg:elferaserplus} we leverage two bits of `10' for this case (Line~\ref{alg:elferaserplus:noterase}).

\subsection{{\em Elf}+ Restorer}

Correspondingly, {\em Elf} Restorer needs to make some adjustments. As depicted in Algorithm~\ref{alg:elfrestorerplus}, we first read one bit of flag code from the input stream $in$. If the flag code equals to `0', it means that the significand count $\beta^*$ of the current value is the same as that of the previous one, so we set $\beta^*$ as $\beta^*_{pre}$, get the erased value $v'$ from the decompressor, and restore $v$ from $v'$ with the help of $\beta^*$ (Lines~\ref{alg:elfrestoreplus:equal:start}-\ref{alg:elfrestoreplus:equal:end}). If the flag code does not equal to `0', we further read one bit of flag code from $in$. If the new flag code is equal to `0', we just obtain $v$ from $in$ (Line~\ref{alg:elfrestoreplus:noterased}). Otherwise, we get $\beta^*$ by reading four bits from $in$, obtain the erased value $v'$ from the decompressor, and restore $v$ from $v'$ with $\beta^*$ (Lines~\ref{alg:elfrestoreplus:notequal:start}-\ref{alg:elfrestoreplus:notequal:end}). Note that we need also to update $\beta^*_{pre}$ for the decompression of the next value (Line~\ref{alg:elfrestoreplus:notequal:end}). The function of $restore$ (Lines~\ref{alg:elfrestore:func:start}-\ref{alg:elfrestore:func:end}) has the same logic with that in Algorithm~\ref{alg:elfrestorer}.

\begin{algorithm}[t]
\small
	\caption{$ElfPlusRestorer(in)$}\label{alg:elfrestorerplus}
	\If{$in.read(1) = 0$\label{alg:elfrestoreplus:equal:start}}{
		$\beta^* \leftarrow \beta^*_{pre}$; $v' \leftarrow XOR_{dcmp}(in)$\;
		$v \leftarrow restore(\beta^*, v')$\;\label{alg:elfrestoreplus:equal:end}
	}
	\ElseIf{$in.read(1) = 0$}{
		$v \leftarrow XOR_{dcmp}(in)$\;\label{alg:elfrestoreplus:noterased}
	}
	\Else{
		$\beta^* \leftarrow in.read(4)$; $v' \leftarrow XOR_{dcmp}(in)$\;\label{alg:elfrestoreplus:notequal:start}
		$v \leftarrow restore(\beta^*, v')$; $\beta^*_{pre} \leftarrow \beta^*$\;\label{alg:elfrestoreplus:notequal:end}
	}		
	\Return{v}\;
	\Fn{$restore(\beta^*, v')$\label{alg:elfrestore:func:start}}{
		\If{$\beta^* = 0$}{
			$v \leftarrow 10^{SP(v') + 1}$; \tcp{Equation~(\ref{equ:recover1})}
		}
		\Else{
			$\alpha \leftarrow \beta^* - (SP(v') + 1)$; \tcp{Equation~(\ref{equ:beta2alpha2})}
			$v \leftarrow LeaveOut(v', \alpha) + 10^{-\alpha}$; \tcp{Equation~(\ref{equ:recover})}
		}
		\Return{v}\;\label{alg:elfrestore:func:end}
	}
\end{algorithm}

\begin{figure*}[htb]
  \centering
  \includegraphics[width=6.8in]{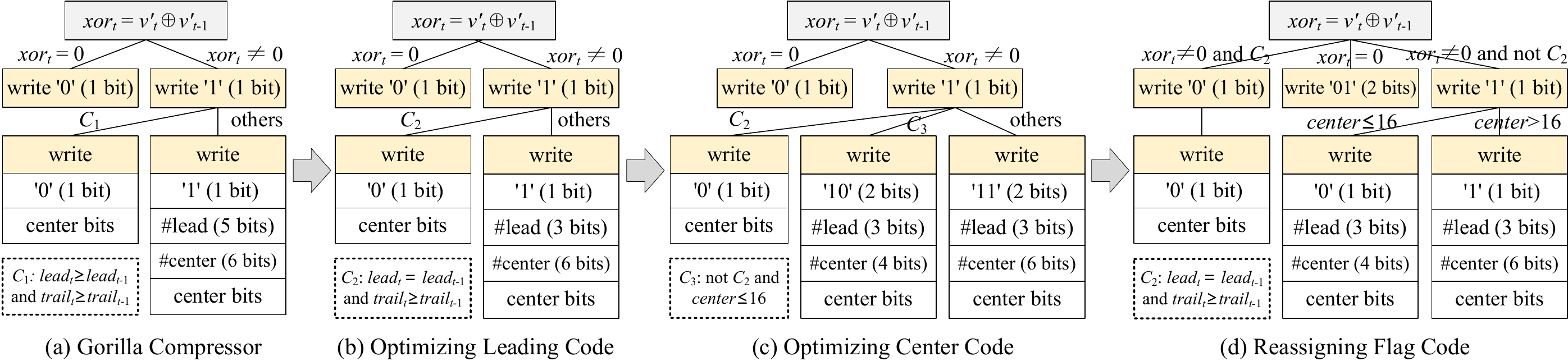}
  \caption{Evolutionary Process of {\em Elf} XOR$_{cmp}$ for $v'_t$ $(t \neq 1)$.}
  \label{fig:XORCompressor}
\end{figure*}

\section{XOR$_{cmp}$ and XOR$_{dcmp}$}\label{sec:xorcompressor}

Theoretically, any existing XOR-based compressor such as Gorilla~\cite{pelkonen2015gorilla} and Chimp~\cite{liakos2022chimp} can be utilized in {\em Elf}. Since the erased value $v'$ tends to contain long trailing zeros, to compress the time series compactly, in this section, we propose a novel XOR-based compressor and the corresponding decompressor. Note that both {\em Elf} and {\em Elf}+ use the same XOR$_{cmp}$ and XOR$_{dcmp}$.

\subsection{{\em Elf} XOR$_{cmp}$}

% 第一个元素
\subsubsection{First Value Compression}

Existing XOR-based compressors store the first value $v'_1$ of a time series using 64 bits. However, after being erased some insignificant mantissa bits, $v'_1$ tends to have a large number of trailing zeros. As a result, we leverage $\lceil log_265 \rceil = 7$ bits to record the number of trailing zeros $trail$ of $v'_1$ (note that $trail$ can be assigned a total of 65 values from 0 to 64), and store $v'_1$'s non-trailing bits with $64 - trail$ bits. In all, we utilize $71 - trail$ bits to record the first value, which is usually less than 64 bits.

% 后面的元素
% 基本思想是：1、xor前导零很多，尾随零很多，因此center bits很少；2、
\subsubsection{Other Values Compression}

For each value $v'_t$ that $t > 1$, we store $xor_t = v'_t \oplus v'_{t - 1}$ as most existing XOR-based compressors did. Our proposed XOR-based compressor is extended from Gorilla~\cite{pelkonen2015gorilla} and at the same time borrows some ideas from Chimp~\cite{liakos2022chimp}. 

\textbf{Gorilla Compressor}. As shown in Figure~\ref{fig:XORCompressor}(a), Gorilla compressor checks whether $xor_t$ is equal to 0 or not. If $xor_t = 0$ (i.e., $v'_t = v'_{t-1}$), Gorilla writes one bit of ``0'', and thus it can save many bits without actually storing $v'_t$. If $xor_t \neq 0$, Gorilla writes one bit of ``1'' and further checks whether the condition $C_1$ is satisfied. Here $C_1$ is ``$lead_t \geq lead_{t - 1}$ and $trail_t \geq trail_{t - 1}$'', meaning that the leading zeros count and trailing zeros count of $xor_t$ are greater than or equal to those of $xor_{t-1}$, respectively. If $C_1$ does not hold, after writing a bit of ``1'', Gorilla stores the leading zeros count and center bits count with 5 bits and 6 bits respectively, followed by the actual center bits. Otherwise, $xor_t$ shares the information of leading zeros count and center bits count with $xor_{t - 1}$, which is expected to save some bits.

\textbf{Leading Code Optimization}. Observing that the leading zeros count of an XORed value is rarely more than 30 or less than 8, Chimp~\cite{liakos2022chimp} proposes to use only $log_28 = 3$ bits to represent up to 24 leading zeros. In particular, Chimp leverages 8 exponentially decaying steps (i.e., $0, 8, 12, 16, 18, 20, 22, 24$) to approximately represent the leading zeros count. If the actual leading zeros count is between 0 and 7, Chimp approximates it to be $0$; if it is between 8 and 11, Chimp regards it as $8$; and so on. The condition of $C_1$ is therefore converted into $C_2$, i.e., ``$lead_t = lead_{t - 1}$ and $trail_t \geq trail_{t - 1}$''. By applying this optimization to the Gorilla compressor, we can get a compressor shown in Figure~\ref{fig:XORCompressor}(b).

\textbf{Center Code Optimization}. Both $v'_t$ and $v'_{t - 1}$ are supposed to have many trailing zeros, which results in an XORed value with long trailing zeros. Besides, $v'_t$ would not differentiate much from $v'_{t-1}$ in most cases, contributing to long leading zeros in the XORed value. That is, the XORed value tends to have a small number of center bits (usually not more than $16$). To this end, if the center bits count is less than or equal to 16, we use only $log_216 = 4$ bits to encode it. Although we need one more flag bit, we can usually save one bit in comparison with the original solution. After optimizing the center code, we get a compressor shown in Figure~\ref{fig:XORCompressor}(c).

\textbf{Flag Code Reassignment}. Figure~\ref{fig:XORCompressor}(c) shows that we use only 1 flag bit for the case of $xor_t = 0$, but 2 or 3 flag bits for the cases of $xor_t \neq 0$. As pointed out by Chimp~\cite{liakos2022chimp}, identical consecutive values are not very frequent in floating-point time series. Thus, using only 1 bit to indicate the case of $xor_t = 0$ is not particularly effective. To this end, we reassign the flag codes to the four eases. Therefore, each case uses only 2 bits of flag, as illustrated in Figure~\ref{fig:XORCompressor}(d).

\subsubsection{Summary of {\em Elf} XOR$_{cmp}$}

Algorithm~\ref{alg:elfXORCompressor} depicts the pseudo-code of {\em Elf} XOR$_{cmp}$, which is self-explanatory. In Lines~\ref{alg:elfXORCompressor:first:start}-\ref{alg:elfXORCompressor:first:end}, we deal with the first value of a time series, and in Lines~\ref{alg:elfXORCompressor:other:start}-\ref{alg:elfXORCompressor:other:end}, we handle the four cases shown in Figure~\ref{fig:XORCompressor}(d) respectively. Note that the function $binNumOfLeadingZeros(xor)$ in Line~\ref{alg:elfXORCompressor:binNumOfLead} calculates the approximate leading zeros count of $xor$, as discussed above.

\begin{algorithm}[t]
\small
	\caption{$ElfXOR_{cmp}(v'_t, out)$}\label{alg:elfXORCompressor}
	\If(\tcp*[h]{compress the first value}){$v'_t$ is the first value\label{alg:elfXORCompressor:first:start}}{
		$lead_t \leftarrow \infty$; $trail_t \leftarrow numOfTrailingZeros(v'_t)$\;
		$out.write(trail_t, 7)$\;\label{alg:elfXORCompressor:trail}
		$out.write(nonTrailingBits(v'_t), 64 - trail_t)$\;\label{alg:elfXORCompressor:first:end}
	}
	\Else(\tcp*[h]{compress other values}){
		$xor \leftarrow v'_t \oplus v'_{t-1}$\;\label{alg:elfXORCompressor:other:start}
		\If(\tcp*[h]{case 01}){$xor = 0$}{
			$out.writeBit(``01")$\;
			$lead_t \leftarrow lead_{t - 1}; trail_t \leftarrow trail_{t - 1}$\;
		}
		\Else{
			$lead_t \leftarrow binNumOfLeadingZeros(xor)$\;\label{alg:elfXORCompressor:binNumOfLead}
			$trail_t \leftarrow numOfTrailingZeros(xor)$\;
			$center \leftarrow 64 - lead_t - trail_t$\;
			\If{$lead_t = lead_{t - 1}$ and $trail_t \geq trail_{t-1}$}{
				$out.writeBit(``00")$; \tcp{case 00}
			}
			\ElseIf(\tcp*[h]{case 10}){$center \leq 16$\label{alg:elfXORCompressor:centercountsplit}}{
				$out.writeBit(``10")$\;
				$out.write(lead_t, 3); out.write(center, 4)$\;\label{alg:elfXORCompressor:smallercenter}
			}
			\Else(\tcp*[h]{case 11}){
				$out.writeBit(``11")$\;
				$out.write(lead_t, 3); out.write(center, 6)$\;\label{alg:elfXORCompressor:biggercenter}
			}
			$out.write(centerBits(v'_t), center)$\;\label{alg:elfXORCompressor:other:end}
		}
	}
\end{algorithm}

\subsection{{\em Elf} XOR$_{dcmp}$}

The decompressor takes opposite actions of the compressor. As shown in Algorithm~\ref{alg:elfXORDecompressor}, {\em Elf} XOR$_{dcmp}$ takes an input stream $in$ as input. We decompress the first value in Lines~\ref{alg:elfXORDecompressor:first:start}-\ref{alg:elfXORDecompressor:first:end}, and cope with the four cases respectively in Lines~\ref{alg:elfXORDecompressor:other:start}-\ref{alg:elfXORDecompressor:other:end}. For case 01, the algorithm sets the current value $v'_t$ as the previous one $v'_{t - 1}$. For case 00, case 10 and case 11, we first update the leading zeros count $lead_t$, center bits count $center$ and trailing zeros count $trail_t$ respectively, and then get the current value $v'_t$ (Line~\ref{alg:elfXORDecompressor:other:end}). At last, $v'_t$ is returned to {\em Elf} Restorer (Line~\ref{alg:elfXORDecompressor:return}).

\section{Discussion}\label{sec:discuss}

In this section, we first report the implementation details, and then analyze the effectiveness and complexity of {\em Elf} algorithm. Next, we investigate a possible variant. Finally, we extend {\em Elf} to the special numbers of double values. If not specified, the discussion for {\em Elf} is also applicable to {\em Elf}+.

\subsection{Implementation Details}\label{implementation}

\subsubsection{Significand Count Calculation}

During the implementation, we find that the most time-consuming step of {\em Elf} compression is to calculate the significand counts of floating-point values. Currently, most programming languages do not provide out-of-the-box statements for calculating the significand counts of floating-point values efficiently. The naive method is to first transform a floating-point value into a string, and then calculate its significand count by scanning the string. However, this method runs very slowly since the data type transformation is quite expensive. Other methods, such as {\tt BigDecimal} in Java language, perform even worse as these high-level classes implement many complex but unnecessary logics, which are not suitable for the calculation of significand counts.

\begin{algorithm}[t]
\small
	\caption{$ElfXOR_{dcmp}(in)$}\label{alg:elfXORDecompressor}
	\If(\tcp*[h]{decompress the first value}){it is the first value\label{alg:elfXORDecompressor:first:start}} {
		$lead_t \leftarrow \infty; trail_t \leftarrow in.read(7)$\;\label{alg:elfXORDecompressor:trail}
		$v'_t \leftarrow in.read(64 - trail_t) << trail_t$\;\label{alg:elfXORDecompressor:first:end}
	}
	\Else(\tcp*[h]{decompress other values}){
		$flag \leftarrow in.read(2)$\;\label{alg:elfXORDecompressor:other:start}
		\If(\tcp*[h]{case 01}){$flag = ``01"$}{
			$lead_t \leftarrow lead_{t - 1}; trail_t \leftarrow trail_{t - 1}$\;
			$v'_t \leftarrow v'_{t-1}$\; 
		}
		\Else{
			\If(\tcp*[h]{case 00}\label{alg:elfXORDecompressor:case00:start}){$flag = ``00"$}{
				$lead_t \leftarrow lead_{t - 1}; trail_t \leftarrow trail_{t - 1}$\;
				$center \leftarrow 64 - lead_t - trail_t$;\label{alg:elfXORDecompressor:case00:end}		
			}
			\ElseIf(\tcp*[h]{case 10}){$flag = ``10"$}{
				$lead_t \leftarrow in.read(3); center \leftarrow in.read(4)$\;
				$trail_t \leftarrow 64-lead_t - center$;\label{alg:elfXORDecompressor:smallcenter}
			}
			\Else(\tcp*[h]{case 11}){
				$lead_t \leftarrow in.read(3); center \leftarrow in.read(6)$\;
				$trail_t \leftarrow 64-lead_t - center$;\label{alg:elfXORDecompressor:bigcenter}
			}
			$v'_t \leftarrow (in.read(center) << trail_{t}) \oplus v'_{t-1}$\;\label{alg:elfXORDecompressor:other:end}
		}
	}	
	\Return{$v'_t$;}\label{alg:elfXORDecompressor:return}
\end{algorithm}

\textbf{{\em Elf} Implementation}. We adopt a trial-and-error approach. In particular, for our basic {\em Elf} Eraser (i.e., Algorithm~\ref{alg:elferaser})~\cite{li2023elf}, we iteratively check if the condition ``$v \times 10^i = \lfloor v \times 10^i \rfloor$'' holds (only when the result of $v \times 10^i$ does not have the fractional part, does the condition hold), where $i$ is sequentially from $sp^*$ to at most $sp^* + 17$ (note that the maximum significand count of a double value is 17~\cite{kahan1996lecture, liakos2022chimp}). Here, $sp^*$ is calculated by:
\begin{equation}
sp^* = \left\{  
	\begin{array}{ll}
		1 \quad\quad\quad\quad\quad		SP(v) \geq 0\\
		-SP(v) \quad\quad				SP(v) < 0
	\end{array}
\right.
\end{equation}

\noindent {The value $i$ (denoted as $i^*$) that first makes the equation ``$v \times 10^i = \lfloor v \times 10^i \rfloor$'' hold can be deemed as the decimal place count $\alpha$~\footnote{It is not exactly true for floating-point calculation. We may get an $i^* > DP(v)$. For example, we get 55.00000000000001 for $0.55 \times 10^2$ but $550.0$ for $0.55 \times 10^3$, so $i^* = 3$. However, this will not lead to lossy compression.}. At last, we can get the significand count $\beta = i^* + SP(v) + 1$ according to Equation~(\ref{equ:beta2alpha}).}

\textbf{{\em Elf}+ Implementation}. The verification of the condition ``$v \times 10^i = \lfloor v \times 10^i \rfloor$'' is expected to take $\mathcal{O}(\beta)$ in terms of time complexity. To expedite this process, we take full advantage of the fact that most values in a time series have the same significand count. Particularly, as depicted in Algorithm~\ref{alg:betacal}, we start the verification at $i = max(\beta^*_{pre} - SP(v) - 1, 1)$ based on Equation~(\ref{equ:beta2alpha}). There are two cases. Case 1: $\beta \geq \beta^*_{pre}$. For this case, if ``$v \times 10^i = \lfloor v \times 10^i \rfloor$'' does not hold, we repetitively increase $i$ by 1 until the condition is satisfied (Lines~\ref{alg:betacal:increase:start}-\ref{alg:betacal:increase:end}). Case 2: $\beta < \beta^*_{pre}$. For this case, we should constantly adjust $i$ by decreasing it until the condition ``$i>1$ and $v \times 10^{i-1} = \lfloor v \times 10^{i-1} \rfloor$'' does not hold (Lines~\ref{alg:betacal:decrease:start}-\ref{alg:betacal:decrease:end}). Finally, the significand count is obtained and returned according to Equation~(\ref{equ:beta2alpha}) (Line~\ref{alg:betacal:return}). 

Algorithm~\ref{alg:betacal} is expected to take only $\mathcal{O}(1)$, since the values in a time series have similar significand counts.

\begin{algorithm}[t]
\small
	\caption{$ElfPlusBetaCalculation(v, \beta^*_{pre})$}\label{alg:betacal}
	\textcolor{black}{$i \leftarrow max(\beta^*_{pre} - SP(v) - 1, 1)$; \tcp{Equation~(\ref{equ:beta2alpha})}
	\tcp{Case 1: $\beta \geq \beta^*_{pre}$}
	\While{$v \times 10^i \neq \lfloor v \times 10^i \rfloor$\label{alg:betacal:increase:start}}{
		$i \leftarrow i + 1$\;\label{alg:betacal:increase:end}
	}
	\tcp{Case 2: $\beta < \beta^*_{pre}$}
	\While{$i > 1$ and $v \times 10^{i-1} = \lfloor v \times 10^{i-1} \rfloor$\label{alg:betacal:decrease:start}}{
		$i \leftarrow i - 1$\;\label{alg:betacal:decrease:end}
	}
	\Return{$SP(v) + i + 1$;}\label{alg:betacal:return}\tcp{Equation~(\ref{equ:beta2alpha})}}
\end{algorithm}

\subsubsection{Start Position Calculation}

Another time-consuming operation is calculating the start position $SP(v)$ of a value $v$. In our initial implementation of {\em Elf}~\cite{li2023elf}, we achieve this through $SP(v) = \lfloor log_{10}|v| \rfloor$ directly. However, logarithmic operations are relatively expensive. In {\em Elf}+, we leverage two sorted exponential arrays, i.e., $logArr_1 = \{10^0, 10^1, ..., 10^i, ...\}$ and $logArr_2 = \{10^0 , 10^{-1}, ..., 10^{-j}, ...\}$, to accelerate this process. Particularly, we sequentially scan these two arrays firstly. If $v \geq 1$ and $10^i \leq v < 10^{i+1}$, then $SP(v) = i$; if $v < 1$ and $10^{-j} \leq v < 10^{-(j - 1)}$, then $SP(v) = -j$. In {\em Elf}+ implementation, we set $|logArr_1| = |logArr_2| = 10$, because this can meet the requirements of most time series. If $v \geq 10^{10}$ or $v \leq 10^{-10}$, we call $\lfloor log_{10}|v| \rfloor$ to get $SP(v)$ finally.

We want to emphasize that our {\em Elf} compression algorithm is orthogonal to the ways of significand count calculation and start position calculation. In the future, we may design a special computer instruction or special hardware for these two calculations, which can potentially enhance the efficiency further.

\subsection{Effectiveness Analysis}\label{subsec:effectiveness}

{\em Elf} Eraser transforms a floating-point value to another one with more trailing zeros under a guaranteed bound (see Theorem~\ref{theorem:effectiveness}), so it can potentially improve the compression ratio of most XOR-based compression methods tremendously.

\begin{Theorem}\label{theorem:effectiveness}
Given a double value $v$ with its decimal significand count $\beta = DS(v)$, we can erase $x$ bits in its mantissa, where $51 - \beta log_210 < x < 53 - (\beta - 1)log_210$.
\end{Theorem}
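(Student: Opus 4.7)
The plan is to translate the count of erased bits into an arithmetic expression and then squeeze it between the two advertised bounds using only the elementary inequalities for ceiling and floor. From Section~\ref{subsubsec:mpnsearch} the erasing step removes the mantissa bits after position $m_{g(\alpha)}$, so the number of erased bits is
\[
x \;=\; 52 - g(\alpha) \;=\; 52 - \lceil \alpha \log_2 10 \rceil - (e-1023),
\]
with $\alpha = DP(v)$ and $e-1023 = \lfloor \log_2|v|\rfloor$. So the first step is simply to write $x$ in this closed form.

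Next, I will change variables from $\alpha$ to $\beta$. Equation~(\ref{equ:beta2alpha}) gives $\alpha = \beta - SP(v) - 1$, and by Definition~\ref{def:dpds} we have $SP(v) = \lfloor \log_{10}|v|\rfloor$, so $SP(v) \le \log_{10}|v| < SP(v)+1$. Combined with the identity $\log_2|v| = \log_{10}|v|\cdot \log_2 10$, this yields the two-sided sandwich
\[
(\beta-1)\log_2 10 \;\le\; \alpha \log_2 10 + \log_2|v| \;<\; \beta \log_2 10,
\]
which is the bridge between the raw expression for $x$ and the statement of the theorem.

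Finally, I will apply the trivial bounds $y \le \lceil y \rceil < y+1$ and $z - 1 < \lfloor z \rfloor \le z$ to the expression for $x$. The upper bound $\lceil \alpha \log_2 10 \rceil \ge \alpha\log_2 10$ together with $\lfloor \log_2|v|\rfloor > \log_2|v| - 1$ gives
\[
x \;<\; 53 - \bigl(\alpha \log_2 10 + \log_2|v|\bigr) \;\le\; 53 - (\beta-1)\log_2 10,
\]
and symmetrically $\lceil \alpha\log_2 10\rceil < \alpha\log_2 10 + 1$ together with $\lfloor \log_2|v|\rfloor \le \log_2|v|$ gives $x > 51 - \beta\log_2 10$.

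The only subtle point, and the step I expect to get flagged, is preserving strict inequalities throughout. The lower bound uses $\lceil y\rceil < y+1$, which is strict for every real $y$, so that bound needs no case analysis. The upper bound instead relies on $\lfloor z\rfloor > z - 1$, which is also always strict, so no case analysis is required there either. I will mention in passing that since $\alpha \ge 1$ and $\log_2 10$ is irrational, $\alpha \log_2 10$ is never an integer, which is reassuring but not actually needed for strictness. Everything else is routine algebraic manipulation.
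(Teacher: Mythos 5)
Your proof is correct and is essentially the paper's own argument: both express the erased-bit count as $x = 52 - g(\alpha) = 52 - \lceil \alpha\log_2 10\rceil - \lfloor \log_2 |v|\rfloor$, sandwich $\alpha\log_2 10 + \log_2|v|$ between $(\beta-1)\log_2 10$ and $\beta\log_2 10$ (the paper derives this same sandwich as $10^{\beta-\alpha-1} \le v < 10^{\beta-\alpha}$ from the decimal format, which is equivalent to your use of $SP(v) = \lfloor \log_{10}|v|\rfloor$), and finish with the elementary strict bounds $\lceil y\rceil < y+1$ and $\lfloor z\rfloor > z-1$. The differences are purely presentational — the paper bounds $g(\alpha)$ first and then converts to $x$, while you manipulate $x$ directly — so nothing further needs to be said.
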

\begin{proof}
Suppose $\alpha = DP(v)$, we have:
$$
DF(v) = \left\{  
	\begin{array}{ll}
		(d_{\beta-\alpha - 1}d_{\beta-\alpha - 2}...d_0.d_{-1}...d_{-\alpha})_{10} \quad {\rm if} v \geq 1\\
		(0.00...d_{\beta-\alpha - 1}d_{\beta-\alpha - 2}...d_{-\alpha})_{10} \quad\quad {\rm if} v < 1
	\end{array}
\right. 
$$\\
$\Longrightarrow 10^{\beta - \alpha - 1} \leq v < 10^{\beta - \alpha}$\\ 
$\Longrightarrow log_210^{\beta - \alpha - 1} \leq log_2v < log_210^{\beta - \alpha}$\\
$\Longrightarrow \lfloor (\beta - \alpha - 1) log_210 \rfloor \leq \lfloor log_2v \rfloor \leq \lfloor (\beta - \alpha) log_210 \rfloor$\\
$\Longrightarrow \lceil \alpha log_210 \rceil + \lfloor (\beta - \alpha - 1) log_210 \rfloor \leq \lceil \alpha log_210 \rceil + \lfloor log_2v \rfloor = g(\alpha) \leq \lceil \alpha log_210 \rceil + \lfloor (\beta - \alpha) log_210 \rfloor$\\
$\Longrightarrow \alpha log_210 + (\beta - \alpha - 1) log_210 - 1 < g(\alpha) < \alpha log_210 + 1 + (\beta - \alpha)log_210$\\
$\Longrightarrow (\beta - 1)log_210 - 1 < g(\alpha) < \beta log_210 + 1$\\
$\Longrightarrow 51 - \beta log_210 < 52 - g(\alpha) = x < 53 - (\beta - 1)log_210$.
\end{proof}

According to Theorem~\ref{theorem:effectiveness}, the number of erased bits is dependent merely on the decimal significand count $\beta$ of the double value. A bigger $\beta$ usually means fewer bits erased. If $\beta \leq 14$, we can erase at least $\lceil 51 - 14 \times log_210 \rceil =  5$ bits, which always guarantees a positive gain. But if $\beta \geq 16$, we can only erase at most $\lfloor 53 - (16 - 1) \times log_210 \rfloor = 3$ bits, leading to a negative gain as it requires at least 4 bits to record $\beta^*$. As a consequence, {\em Elf} compression algorithm keeps $v$ as it is when $DS(v) \geq 16$. {\em Elf}+ usually has a better compression ratio than {\em Elf}, since it uses fewer bits to record $\beta^*$. 

\subsection{Complexity Analysis}

\subsubsection{Time Complexity}

For each value, {\em Elf} Eraser (i.e., Algorithm~\ref{alg:elferaser}) can directly determine the erased bits in $\mathcal{O}(1)$ and perform the erasing operation by efficient bitwise manipulations. In {\em Elf} XOR$_{cmp}$ (i.e., Algorithm~\ref{alg:elfXORCompressor}), all operations can be performed in $\mathcal{O}(1)$.
For {\em Elf} Decompressor, Restorer (i.e., Algorithm~\ref{alg:elfrestorer}) and XOR$_{dcmp}$ (i.e., Algorithm~\ref{alg:elfXORDecompressor}) sequentially read data from an input stream and perform all operations in $\mathcal{O}(1)$. Overall, the time complexity of {\em Elf} is $\mathcal{O}(N)$, where $N$ is the length of a time series.

Our proposed {\em Elf} compression algorithm performs an extra erasing step before actually compressing the data. It is reasonable that the overall computation complexity of {\em Elf} compression algorithm is a little bit higher than that of other XOR-based compression methods, e.g., Gorilla and Chimp.

{\em Elf}+ has the same time complexity with {\em Elf}, but it usually runs faster than {\em Elf}, because it calculates the significand counts of values by making full use of that of the previous one value.

\subsubsection{Space Complexity}

Neither {\em Elf} Eraser nor {\em Elf} Restorer stores any data, while both {\em Elf}+ Eraser and {\em Elf}+ Restorer only record the modified significand count of the previous value. Besides, both XOR$_{cmp}$ and XOR$_{dcmp}$ only store the previous leading zeros count $lead_{t - 1}$, trailing zeros count $trail_{t - 1}$ and value $v'_{t - 1}$. To this end, the space complexity of {\em Elf} and {\em Elf}+ is both $\mathcal{O}(1)$.

\subsection{A Possible Variant Discussion}

In the erasing process, we let $v' = v - \delta$ where $0 \leq \delta < 10^{-\alpha}$. Can we let $0 \leq \delta < k \times 10^{-\alpha}$, $k \in \{1, 2, ..., 9\}$, which is supposed to make $v'$ have more trailing zeros?

The decimal value $k \times 10^{-\alpha}$ can be represented by $f_k(\alpha) = \lceil |log_2(k \times 10^{-\alpha})| \rceil = \lceil |log_2k -\alpha log_210| \rceil$ binary bits. Since $k < 10$ and $\alpha \geq 1$, $f_k(\alpha) = \lceil \alpha log_210 - log_2k \rceil$. According to Theorem~\ref{theorem:f1}, $\delta = \sum_{i = f_k(\alpha) + 1}^{|\bar{l}|}b_i \times 2^{-i} \leq \sum_{i = f_k(\alpha) + 1}^{|\bar{l}|} 2^{-i} < \sum_{i = f_k(\alpha) + 1}^{+\infty} 2^{-i} = 2^{-f_k(\alpha)} = 2^{-\lceil \alpha log_210 - log_2k \rceil} \leq 2^{-(\alpha log_210 - log_2k)} = 2^{log_2(k \times 10^{-\alpha})} = k \times 10^{-\alpha}$. That is to say, if we erase the bits after $b_{-f_k(\alpha)}$ in $BF(v)$, we can still recover $v$ by $LeaveOut(v', \alpha) + k'\times 10^{-\alpha}$, where $LeaveOut$ has the same meaning with that in Equation~(\ref{equ:recover}), and $k' \in \{1, 2, ..., k\}$. But it requires $\lceil log_2k \rceil$ bits to store $k'$. We call this method {\em Elf$_k$}.

\begin{Theorem}\label{theorem:elfk}
$Elf_k$ will not achieve a better gain than $Elf$.
\end{Theorem}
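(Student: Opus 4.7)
The plan is to carry out a per-value accounting that compares the output length of Elf$_k$ with that of Elf. Both variants emit the same modification flag and the same 4-bit $\beta^*$ header, so only two per-value differences matter: Elf$_k$ must emit $\lceil \log_2 k \rceil$ extra bits to record $k' \in \{1, \dots, k\}$ needed for the reconstruction $v = LeaveOut(v', \alpha) + k' \cdot 10^{-\alpha}$, while it erases $m := f(\alpha) - f_k(\alpha)$ additional low-order mantissa bits, where $f_k(\alpha) = \lceil \alpha \log_2 10 - \log_2 k \rceil$ as given in the statement.

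The first step is to bound $m \leq \lceil \log_2 k \rceil$. This is an immediate consequence of the subadditivity of the ceiling: $\lceil x \rceil = \lceil (x - c) + c \rceil \leq \lceil x - c \rceil + \lceil c \rceil$ for $c \geq 0$, applied with $x = \alpha \log_2 10$ and $c = \log_2 k$. So the extra bits freed on the mantissa side can never exceed the extra bits spent to store $k'$.

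The second step is to transfer this inequality into the final bitstream by showing that the $m$ extra trailing zeros of $v'_t$ save at most $m$ bits inside the XOR compressor. Inspecting Algorithm~\ref{alg:elfXORCompressor}, only $trail_t$ and the derived center-bit width $64 - lead_t - trail_t$ depend on the freshly zeroed bits; since both $v'_t$ and $v'_{t-1}$ are erased more deeply by the same $m$ under Elf$_k$, $xor_t$ gains at most $m$ extra trailing zeros, and hence the encoder writes at most $m$ fewer explicit center bits in cases 10 and 11, while the 2-bit flag code and the fixed-width lead/center-count headers keep their widths. Combining the two steps, the per-value net gain of Elf$_k$ relative to Elf is at most $m - \lceil \log_2 k \rceil \leq 0$.

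The main obstacle is the statefulness of the XOR compressor: a smaller $trail_t$ may flip the case classification of value $t+1$ through the predicate ``$lead_t = lead_{t-1}$ and $trail_t \geq trail_{t-1}$'', and such flips can propagate down the stream. The way I would absorb this is to observe that any case transition only redistributes bits already budgeted in the $m$-bit center saving: switching from case 00 to case 10 or 11 widens the header by a constant while the center region shrinks by exactly the amount already accounted for. Hence the pairwise inequality survives the state evolution, and summing over the stream yields the theorem.
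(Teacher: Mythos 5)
Your first step is sound and is, in fact, the entire content of the paper's own proof: bounding the extra erasable bits $m = f(\alpha)-f_k(\alpha)$ by $\lceil \log_2 k\rceil$ (the paper derives $\log_2 k - 1 < m < \log_2 k + 1$ directly from $x \le \lceil x \rceil < x+1$, and then notes the integer gain $m - \lceil \log_2 k\rceil$ lies in $(-2,1)$, hence is at most $0$). Crucially, the paper \emph{defines} the gain at the eraser level---extra erased bits minus extra header bits---and stops there. Your proposal instead tries to prove a stronger, end-to-end statement about the emitted bitstream, and that is where it breaks.

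The failing step is your second one: the claim that if both operands acquire $m$ extra trailing zeros, then $xor_t$ gains at most $m$ extra trailing zeros and the encoder writes at most $m$ fewer bits. Trailing zeros of an XOR are not Lipschitz in the trailing zeros of the operands. Suppose under {\em Elf} the only positions where $v'_t$ and $v'_{t-1}$ disagree lie inside the extra region that {\em Elf}$_k$ erases; then under {\em Elf}$_k$ we get $xor_t = 0$, the encoder jumps to the 2-flag-bit ``identical value'' case, and the saving is the \emph{entire} cost of the leading/center headers plus center bits that {\em Elf} must pay---easily far more than $m$, and more than the $\lceil \log_2 k\rceil$ header penalty. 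Concretely, if the two erased values differ only in mantissa positions $g(\alpha)-1$ and $g(\alpha)$, {\em Elf} emits a case-10 record of $2+3+4+2=11$ bits while {\em Elf}$_k$ (with $k=4$, paying $2$ extra bits for $k'$) emits $2$ bits, a net win of $7$ bits for {\em Elf}$_k$ on that value. So the per-value bitstream inequality you are trying to establish is simply false, and the closing paragraph about case transitions ``redistributing'' budgeted bits cannot repair it, since the case-01 transition is exactly the one that escapes the $m$-bit budget. The lesson is that the theorem should be read (and proved) as the paper does: as an accounting identity about the erasing step itself, where your step 1 alone, combined with the integrality of the gain, finishes the argument.
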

\begin{proof}
Suppose $y$ is the additional number of bits that $Elf_k$ can erase over $Elf$ (i.e., $Elf_1$), then $y - \lceil log_2k \rceil$ is the gain of $Elf_k$ over $Elf$. We have:
$y = (52 - g_k(\alpha)) - (52 - g_1(\alpha)) = \lceil \alpha log_210 \rceil - \lceil \alpha log_210 - log_2k \rceil$
$\Longrightarrow \alpha log_210 - (\alpha log_210 - log_2k + 1) < y < (\alpha log_210 + 1) - (\alpha log_210 - log_2k) \Longrightarrow log_2k - 1 < y < log_2k + 1 \Longrightarrow log_2k - 1 - \lceil log_2k \rceil < y - \lceil log_2k \rceil < log_2k + 1 - \lceil log_2k \rceil \Longrightarrow -2 < y - \lceil log_2k \rceil < 1$. It means that $Elf_k$ would consume the same bits with or one more bit than $Elf$. 
\end{proof}

%\textbf{Question~\uppercase\expandafter{\romannumeral3}}: Can we perform {\em Elf} algorithm on the results of an XOR-based compression method?

\subsection{{\em Elf} for Special Numbers}\label{subsec:specialNumbers}
%介绍doule数据的几种类型，包括正规数、次正规数、0、NaN、无穷大

As shown in Figure~\ref{fig:DoubleLayout}, according to the values of $\vec{\boldsymbol{e}}$ and $\vec{\boldsymbol{m}}$, there are four types of special numbers:

(1)~\textbf{Zero}. If $\forall i \in \{1, 2, ..., 11\}$, $e_i = 0$ and $\forall j \in \{1, 2, ..., 52\}$, $m_j = 0$, then $v$ represents a zero. 

(2)~\textbf{Infinity}. If $\forall i \in \{1, 2, ..., 11\}$, $e_i = 1$ and $\forall j \in \{1, 2, ..., 52\}$, $m_j = 0$, then $v$ stands for an infinity.

(3)~\textbf{Not a Number}. If $\forall i \in \{1, 2, ..., 11\}$, $e_i = 1$ and $\exists j \in \{1, 2, ..., 52\}$, $m_j = 1$, then $v$ is not a number (i.e., $v = NaN$).

(4)~\textbf{Subnormal Number}. If $\forall i \in \{1, 2, ..., 11\}$, $e_i = 0$ and $\exists j \in \{1, 2, ..., 52\}$, $m_j = 1$, then $v$ is a subnormal number (or a subnormal). In this case, we have the following equation:
\begin{equation}\label{equ:subnormal}
\begin{aligned}
	v &= (-1)^s \times 2^{-1022} \times (0.m_1m_2...m_{52})_2\\
	  &= (-1)^s \times 2^{-1022} \times \sum_{i=1}^{52}m_i\times 2^{-i}
\end{aligned}
\end{equation}

For these four special numbers, their restorers, compressors and decompressors are the same with that of normal numbers, but their erasers need to be tailored carefully.

\textbf{Zero and Infinity Eraser}. If $v$ is a zero or infinity, we do not perform {\em Elf} erasing because all its mantissa bits are already 0s.

\textbf{NaN Eraser}. If $v$ is NaN, in order to make its trailing zeros as many as possible, we perform the $NaN_{norm}$ operation on it, which sets $m_1 = 1$ and $m_i = 0$ for $i \in \{2, 3, ..., 52\}$, i.e., 
\begin{equation}\label{equ:nannorm}
v' = NaN_{norm}(v) = {\rm 0xfff8000000000000L}\ \&\ v
\end{equation}

\textbf{Subnormal Number Eraser}. According to Equation~(\ref{equ:normal}) and  Equation~(\ref{equ:subnormal}), subnormal numbers can be regarded as the special cases of normal numbers by setting $e = 1$ and $m_0 = 0$. As a result, we can compress subnormal numbers in the same way of normal numbers using {\em Elf} Eraser.

%\noindent where `$\&$' is the bitwise AND operation.
%
%Figure~\ref{fig:OverviewEraser} shows the overall {\em Elf} eraser. Note that although we normalize $v$ when it is NaN, we write a bit of `$0$' to mark it unchanged.
%
%\begin{figure}[htb]
%  \centering
%  \includegraphics[width=3.5in]{}
%  \caption{Overall {\em Elf} Eraser.}
%  \label{fig:OverviewEraser}
%\end{figure}

\section{\textcolor{black}{Extension to Single Values}}\label{sec:singleextend}

\textcolor{black}{A single-precision floating-point value (abbr. \textbf{single} \\ \textbf{value}) has a similar underlying storage layout to that of a double value, but it takes up only 32 bits, where 1 bit is for the sign, 8 bits for the exponent, and 23 bits for the mantissa. To this end, when applying {\em Elf} to single values, we should make the following modifications.}

\textbf{Modifications for equations}. We change ``1023'' in Equation~(\ref{equ:normal}), Equation~(\ref{equ:bmmap}) and Equation~(\ref{equ:g1}) to ``127'', and ``52'' in Equation~(\ref{equ:normal}) and Equation~(\ref{equ:subnormal}) to ``23''. We should also change ``1022'' in Equation~(\ref{equ:subnormal}) to ``126''. For Equation~(\ref{equ:nannorm}), we let $NaN_{norm}(v) = {\rm 0xffc00000} \& v$.

\textbf{Modifications for Eraser and Restorer}. First, we change ``$\delta \leftarrow$ $\sim\!({\rm 0xffffffffffffffffL} << (52 - g(\alpha)))\ \&\ v$'' (i.e., Line~\ref{alg:elferaser:delta} in Algorithm~\ref{alg:elferaser} and Line~\ref{alg:elferaserplus:delta} in Algorithm~\ref{alg:elferaserplus}) to ``$\delta \leftarrow$ $\sim\!({\rm 0xffffffff}<< (23 - g(\alpha)))\ \&\ v$''. Similarly, we change ``$v' \leftarrow ({\rm 0xffffffffffffffffL} << (52 - g(\alpha)))\ \&\ v$'' (i.e., Line~\ref{alg:elferaser:flagtrue:end} in Algorithm~\ref{alg:elferaser} and Line~\ref{alg:elferaserplus:erase:end} in Algorithm~\ref{alg:elferaserplus}) to ``$v' \leftarrow ({\rm 0xffffffff}<< (23 - g(\alpha)))\ \&\ v$''.

Second, since the maximum significand count of a single value is 7~\cite{kahan1996lecture, liakos2022chimp}, we need only $\lceil log_27 \rceil = 3$ bits to store $\beta^*$. Consequently, we change ``$out.write(\beta^*, 4)$'' (i.e., Line~\ref{alg:elferaser:flagtrue:start} in Algorithm~\ref{alg:elferaser} and Line~\ref{alg:eraserplus:writebeta} in Algorithm~\ref{alg:elferaserplus}) to ``$out.write(\beta^*, 3)$''. Correspondingly, we change ``$\beta^* \\ \leftarrow in.read(4)$'' (i.e., Line~\ref{alg:elfrestorer:xordcmp} in Algorithm~\ref{alg:elfrestorer} and Line~\ref{alg:elfrestoreplus:notequal:start} in Algorithm~\ref{alg:elfrestorerplus}) to ``$\beta^* \leftarrow in.read(3)$''.

Third, for single values, the erasing condition ``$\beta^*<16$ and $\delta\neq0$ and $52 - g(\alpha)>4$'' (i.e., Line~\ref{alg:elferaser:erasecondition} in Algorithm~\ref{alg:elferaser} and Line~\ref{alg:elferaserplus:erasecondition} in Algorithm~\ref{alg:elferaserplus}) should be converted into ``$\beta^* < 8$ and $\delta \neq 0$ and $23 - g(\alpha)>3$''. Here, $\beta^*$ will always be less than $2^3 = 8$, so the condition ``$\beta^* < 8$'' can be omitted.

\textbf{Modifications for XOR$_{cmp}$ and XOR$_{dcmp}$}. For the first place, as a single value occupies only 32 bits, we should change all ``64'' in Algorithm~\ref{alg:elfXORCompressor} and Algorithm~\ref{alg:elfXORDecompressor} into ``32'' for single values.

Second, the number of trailing zeros of a single value would not be greater than 32, so we can use only $\lceil log_2(32 \\+ 1) \rceil = 6$ bits to record $trail_t$ in Line~\ref{alg:elfXORCompressor:trail} of Algorithm~\ref{alg:elfXORCompressor}. Similarly, in Line~\ref{alg:elfXORDecompressor:trail} of Algorithm~\ref{alg:elfXORDecompressor}, we only read 6 bits from $in$ to obtain $trail_t$.

Third, in {\em Elf} for double values, we leverage 3 bits for 8 exponentially decaying steps (i.e., 0, 8, 12, 16, 18, 20, 22, 24) to approximately represent the leading zeros count, and 4 or 6 bits to store the number of center bits. As a single value takes up only 32 bits, the leading zeros count and center bits count of two consecutive single values will be much less than that of two consecutive double values, respectively. To this end, in {\em Elf} for single values, although we still utilize 3 bits to approximately represent the leading zeros count, the exponentially decaying steps would be 0, 6, 10, 12, 14, 16, 18 and 20 (corresponding to Line~\ref{alg:elfXORCompressor:binNumOfLead} in Algorithm~\ref{alg:elfXORCompressor}), which provides a fine-grained representation of leading zeros count. Furthermore, for single values, after the erasing and XORing operations, the center bits count of an XORed value is likely to be less than 8. In view of that, if the center bits count is less than 8, we use only $log_28 = 3$ bits to encode it (corresponding to Line~\ref{alg:elfXORCompressor:centercountsplit} and Line~\ref{alg:elfXORCompressor:smallercenter} in Algorithm~\ref{alg:elfXORCompressor}, and Line~\ref{alg:elfXORDecompressor:smallcenter} in Algorithm~\ref{alg:elfXORDecompressor}); otherwise, we use $log_232 = 5$ bits (corresponding to Line~\ref{alg:elfXORCompressor:biggercenter} in Algorithm~\ref{alg:elfXORCompressor} and Line~\ref{alg:elfXORDecompressor:bigcenter} in Algorithm~\ref{alg:elfXORDecompressor}).

\section{Experiments}\label{sec:exp}

\subsection{Datasets and Experimental Settings}

\subsubsection{Datasets}

To verify the performance of {\em Elf} compression algorithm, we adopt 22 datasets including 14 time series and 8 non time series, which are further divided into three categories respectively according to their average decimal significand counts (as described in Table~\ref{tbl:datasets}). Apart from the datasets used by Chimp~\cite{liakos2022chimp}, we also add three datasets (i.e., Vehicle-charge, City-lat and City-lon) to enrich the non time series with small and medium decimal significand counts. Each time series is ordered by the timestamps, while each non time series is in a random order given by its data publisher.

\setlength{\tabcolsep}{0.33em} % for the horizontal padding
\begin{table}[t]
\caption{Details of Datasets}
\centering\label{tbl:datasets}
\resizebox{3.33in}{!}{
\begin{tabular}{|c|c|c|c|c|c|} 
\hline
\multicolumn{3}{|c|}{\textbf{Dataset}}	&\textbf{\#Records}	&\textbf{$\beta$}	&\textbf{Time Span}\\
\hline
\hline
\multirow{14}{*}{\rotatebox[origin=c]{90}{\textbf{Time Series}}}	& \multirow{4}{*}{\textbf{Small $\beta$}}	&City-temp (CT)			&2,905,887			&3						&25 years\\
%\hline
										&&IR-bio-temp (IR)		&380,817,839		&3						&7 years\\
%\hline
										&&Wind-speed (WS)			&199,570,396	&2						&6 years\\
%\hline
										&&PM10-dust	(PM10)		&222,911			&3						&5 years\\
\cline{2-6}
										&&Stocks-UK (SUK)			&115,146,731		&5						&1 year\\
%\hline
										&&Stocks-USA (SUSA)			&374,428,996		&4						&1 year\\
%\hline
										&&Stocks-DE	(SDE)		&45,403,710			&6						&1 year\\
%\hline
										&\textbf{Medium}&Dewpoint-temp (DT)		&5,413,914			&4						&3 years\\
%\hline
										&$\beta$&Air-pressure (AP)		&137,721,453		&7					&6 years\\
%\hline
										&&Basel-wind (BW)			&124,079			&8						&14 years\\
%\hline
										&&Basel-temp (BT)			&124,079			&9						&14 years\\
%\hline
										&&Bitcoin-price	(BP)	&2,741				&9						&1 month\\
%\hline
										&&Bird-migration (BM)		&17,964				&7						&1 year\\
\cline{2-6}
										&\textbf{Large $\beta$}&Air-sensor (AS)			&8,664				&17						&1 hour\\
\hline
\hline
\multirow{8}{*}{\rotatebox[origin=c]{90}{\textbf{Non Time Series}}}&\multirow{2}{*}{\textbf{Small $\beta$}}&Food-price (FP)			&2,050,638			&3						&-\\
%\hline
										&&Vehicle-charge (VC)		&3,395				&3						&-\\
\cline{2-6}
										&&Blockchain-tr (BTR)		&231,031			&5						&-\\
%\hline
										&\textbf{Medium}&SD-bench (SB)			&8,927				&4		&-\\
%\hline
										&$\beta$&City-lat (CLat)			&41,001				&6						&-\\
%\hline
										&&City-lon (CLon)			&41,001				&7						&-\\
\cline{2-6}
										&\multirow{2}{*}{\textbf{Large $\beta$}}&POI-lat (PLat)				&424,205			&16						&-\\
%\hline
										&&POI-lon (PLon)				&424,205			&16						&-\\
\hline
\end{tabular}
}
\end{table}

\textbf{City-temp}~\cite{CityTemp}, collected by the University of Dayton to record the temperature of major cities around the world.

\textbf{IR-bio-temp}~\cite{IRBioTemp}, which exhibits the changes in the temperature of infrared organisms.

\textbf{Wind-speed}~\cite{WindDir}, which describes the wind speed.

\textbf{PM10-dust}~\cite{PM10Dust}, which records near real-time measurements of PM10 in the atmosphere.

\textbf{Stocks-UK, Stocks-USA and Stocks-DE}~\cite{Stocks}, which contain the stock exchange prices of UK, USA and German respectively.

\textbf{Dewpoint-temp}~\cite{DewpointTemp}, which records relative dew point temperature observed by sensors floating on rivers and lakes.

\textbf{Air-pressure}~\cite{AirPressure}, which shows Barometric pressure corrected to sea level and surface level.

\textbf{Basel-wind and Basel-temp}~\cite{Basel}, which respectively record the historical wind speed and temperature of Basel, Switzerland.

\textbf{Bitcoin-price}~\cite{influxdb2data}, which includes the price of Bitcoin in dollar exchange rate.

\textbf{Bird-migration}~\cite{influxdb2data}, an online dataset of animal tracking data that records the position of birds and the vegetation.

\textbf{Air-sensor}~\cite{influxdb2data}, a synthetic dataset recording air sensor data with random noise.

\textbf{Food-price}~\cite{WorldFoodPrice}, global food prices data from the World Food Programme.

\textbf{Vehicle-charge}~\cite{VehicleCharge}, which records the total energy use and charge time of a collection of electric vehicles.

\textbf{Blockchain-tr}~\cite{BlockchianTr}, which records the transaction value of Bitcoin for a single day.

\textbf{SD-bench}~\cite{SSD}, which describes the performance of multiple storage drives through a standardized series of tests.

\textbf{City-lat, City-lon}~\cite{citylat}, which records the latitude and longitude of the cities and towns all over the world.

\textbf{POI-lat, POI-lon}~\cite{POI}, the coordinates in radian of Position-of-Interests (POI) extracted from Wikipedia.

\subsubsection{Baselines}

We compare {\em Elf} compression algorithm with \textbf{four} state-of-the-art lossless floating-point compression methods (i.e., Gorilla~\cite{pelkonen2015gorilla}, Chimp~\cite{liakos2022chimp}, Chimp$_{128}$~\cite{liakos2022chimp} and FPC \cite{burtscher2007high}) and \textbf{five} widely-used general compression methods (i.e., Xz~\cite{Xz}, Brotli~\cite{alakuijala2018brotli}, LZ4~\cite{collet2013lz4}, Zstd~\cite{collet2016zstd} and Snappy~\cite{snappy}). The initial implementation~\cite{li2023elf} of the proposed method is termed as {\em Elf}, and the one that adopts significand count optimization and start position optimization is termed as {\em Elf}+. By regarding {\em Elf} Eraser (or {\em Elf}+Eraser) as a preprocessing step, we also compare \textbf{three} variants of Gorilla, Chimp and Chimp$_{128}$, denoted as Gorilla+Eraser, Chimp+Eraser and Chimp$_{128}$ +Eraser (or Gorilla+Eraser$^+$, Chimp+Eraser$^+$ and \\Chimp$_{128}$+Eraser$^+$) respectively, to verify the effectiveness of the erasing and XOR$_{cmp}$ strategies. Most implementations of these competitors are extended from~\cite{liakos2022chimp}. To make a fair comparison, we optimize the stream implementation of Gorilla as the same as Chimp~\cite{liakos2022chimp}, which improves the efficiency of Gorilla tremendously.  All source codes and datasets are publicly available~\cite{Elf}.

\subsubsection{Metrics}

We verify the performance of various methods in terms of three metrics: compression ratio, compression time and decompression time. Note that the compression ratio is defined as the ratio of the compressed data size to the original one.

\subsubsection{Settings} As Chimp~\cite{liakos2022chimp} did, we regard 1,000 records of each dataset as a block. Each compression method is executed on up to 100 blocks per dataset, and the average metrics of one block are finally reported. By default, we regard each value as a double value.
%We do not perform a compression method on the same dataset for many times because some general compression methods (e.g., Zstd and Snappy) will cache some information, which results in an unfair competition. 
All experiments are conducted on a personal computer equipped with Windows 11, 11th Gen Intel(R) Core(TM) i5-11400 @ 2.60GHz CPU and 16GB memory. The JDK (Java Development Kit) version is 1.8.

\subsection{Overall Comparison for Double Values}

\setlength{\tabcolsep}{0.05em} % for the horizontal padding
\begin{table*}[t]
\caption{Overall comparison with baselines for double values (the best values in each group are marked in \textbf{bold}). The compression ratio, compression time and decompression time are the average measurements on one block (i.e., 1,000 values).}
\centering\label{tbl:overallcomparison}
\addtolength{\leftskip} {-2cm}
\addtolength{\rightskip}{-2cm}
\resizebox{6.880in}{!}{
\begin{tabular}{|c|c|c||c|c|c|c|c|c|c|c|c|c|c|c|c|c||c||c|c|c|c|c|c|c|c||c|} 
\hline
\multicolumn{3}{|c||}{\multirow{3}{*}{\textbf{Dataset}}} & \multicolumn{15}{c||}{\textbf{Time Series}} & \multicolumn{9}{c|}{\textbf{Non Time Series}} \\
\cline{4-27}

\multicolumn{3}{|c||}{}& \multicolumn{4}{c|} {Small $\beta$} & \multicolumn{9}{c|}{Medium $\beta$} & {\scriptsize Large $\beta$} & \multirow{2}{*}{Avg.}	& \multicolumn{2}{c|}{Small $\beta$} 	& \multicolumn{4}{c|}{Medium $\beta$} & \multicolumn{2}{c||}{Large $\beta$}&\multirow{2}{*}{Avg.}\\

\cline{4-17}\cline{19-26}

\multicolumn{3}{|c||}{}& CT	& IR	& WS	& PM10	& SUK	& SUSA	& SDE	& DT	& AP	& BW	& BT	& BP	& BM	& AS & 	& FP	& VC	& BTR	& SB	& CLat	& CLon	& PLat	& PLon	& \\
\hline
\hline

%压缩率
\multirow{11}{*}{\rotatebox[origin=c]{90}{\textbf{Compression Ratio}}}	&  	\multirow{5}{*}{\rotatebox[origin=c]{90}{\textbf{Floating}}}	& Gorilla	&0.85	&0.64	&0.83	&0.48	&0.58	&0.68	&0.72	&0.83	&0.73	&0.99	&0.94	&0.84	&0.79	&0.82	&0.76	&0.58	&1.00	&0.74	&0.63	&1.03	&1.03	&1.03	&1.03	&0.88\\
	& 		& Chimp	&0.64	&0.59	&0.81	&0.46	&0.52	&0.64	&0.67	&0.77	&0.65	&0.88	&0.85	&0.77	&0.72	&\textbf{0.77}	&0.70	&0.47	&0.86	&0.67	&0.55	&0.92	&0.98	&\textbf{0.90}	&\textbf{0.99}	&0.79\\
	&		& Chimp$_{128}$	&0.32	&0.24	&{0.23}	&0.21	&0.29	&{0.23}	&0.27	&0.35	&0.54	&0.71	&\textbf{0.47}	&0.72	&0.50	&\textbf{0.77}	&0.42	&0.34	&0.36	&0.55	&{0.27}	&0.78	&0.85	&\textbf{0.90}	&\textbf{0.99}	&0.63\\
	&		& FPC	&0.75	&0.61	&0.85	&0.50	&0.74	&0.70	&0.73	&0.82	&0.67	&0.92	&0.90	&0.81	&0.75	&0.82	&0.75	&0.62	&0.91	&0.69	&0.59	&0.96	&1.00	&0.95	&1.00	&0.84\\
	&		& Elf	&{0.25}	&{0.21}	&0.25	&{0.16}	&{0.22}	&0.24	&{0.26}	&{0.31}	&{0.31}	&{0.59}	&0.58	&{0.56}	&{0.42}	&0.85	&{0.37}	&{0.23}	&{0.34}	&{0.36}	&{0.27}	&{0.56}	&{0.63}	&0.96	&1.06	&{0.55}\\
	&		& Elf+&	\textbf{0.22}&	\textbf{0.15}&	\textbf{0.20}&	\textbf{0.11}&	\textbf{0.19}&	\textbf{0.18}&	\textbf{0.23}&	\textbf{0.26}&	\textbf{0.25}&	\textbf{0.56}&	0.52&	\textbf{0.50}&	\textbf{0.38}&	0.86&	\textbf{0.33}&	\textbf{0.22}&	\textbf{0.29}&	\textbf{0.30}&	\textbf{0.23}&	\textbf{0.51}&	\textbf{0.60}&	0.98&	1.07&	\textbf{0.52}\\
\cline{2-27}

& \multirow{5}{*}{\rotatebox[origin=c]{90}{\textbf{General}}} & Xz	&\textbf{0.18}	&\textbf{0.16}	&\textbf{0.15}	&\textbf{0.11}	&\textbf{0.16}	&\textbf{0.17}	&\textbf{0.19}	&\textbf{0.27}	&\textbf{0.47}	&\textbf{0.57}	&\textbf{0.35}	&\textbf{0.63}	&\textbf{0.43}	&\textbf{0.79}	&\textbf{0.33}	&\textbf{0.23}	&\textbf{0.23}	&\textbf{0.40}	&\textbf{0.13}	&\textbf{0.60}	&\textbf{0.63}	&\textbf{0.93}	&\textbf{0.96}	&\textbf{0.51}\\
	&		& Brotli	&0.20	&0.18	&0.17	&0.12	&0.19	&0.20	&0.22	&0.32	&0.51	&0.61	&0.39	&0.71	&0.47	&0.85	&0.37	&0.26	&0.28	&0.43	&0.14	&0.65	&0.68	&0.94	&\textbf{0.96}	&0.54\\
	&		& LZ4	&0.36	&0.36	&0.37	&0.27	&0.39	&0.39	&0.41	&0.52	&0.69	&0.69	&0.54	&0.87	&0.61	&1.01	&0.53	&0.41	&0.47	&0.53	&0.30	&0.79	&0.82	&1.00	&1.00	&0.67\\
	&		& Zstd	&0.22	&0.24	&0.19	&0.14	&0.22	&0.24	&0.26	&0.38	&0.58	&0.61	&0.41	&0.75	&0.51	&0.91	&0.40	&0.30	&0.34	&0.45	&0.17	&0.68	&0.71	&0.94	&\textbf{0.96}	&0.57\\
	&		& Snappy	&0.29	&0.30	&0.27	&0.21	&0.32	&0.32	&0.35	&0.51	&0.73	&0.75	&0.54	&0.99	&0.61	&1.00	&0.51	&0.39	&0.42	&0.54	&0.25	&0.83	&0.87	&1.00	&1.00	&0.66\\
\hline
\hline

%压缩时间
\multirow{11}{*}{\rotatebox[origin=c]{90}{\textbf{Compression Time ($\mu$s)}}}	&  	\multirow{5}{*}{\rotatebox[origin=c]{90}{\textbf{Floating}}}	& Gorilla	&\textbf{18}	&\textbf{21}	&\textbf{17}	&\textbf{15}	&\textbf{17}	&\textbf{17}	&\textbf{17}	&\textbf{18}	&\textbf{20}	&\textbf{21}	&\textbf{20}	&\textbf{19}	&\textbf{18}	&\textbf{20}	&\textbf{18}	&\textbf{16}	&\textbf{19}	&\textbf{18}	&\textbf{16}	&\textbf{19}	&\textbf{19}	&\textbf{19}	&\textbf{19}	&\textbf{18}\\
	& 		& Chimp	&23	&\textbf{21}	&22	&18	&23	&22	&23	&24	&\textbf{20}	&26	&25	&24	&25	&27	&23	&21	&24	&22	&20	&26	&26	&23	&26	&23\\
	&		& Chimp$_{128}$	&23	&23	&22	&20	&24	&22	&25	&26	&38	&47	&35	&48	&38	&50	&32	&27	&27	&39	&23	&48	&48	&45	&46	&38\\
	&		& FPC	&34	&40	&40	&40	&28	&28	&28	&31	&40	&42	&47	&27	&30	&38	&35	&39	&43	&43	&41	&42	&48	&40	&48	&43\\
	&		& Elf	&51	&53	&59	&50	&54	&56	&58	&57	&51	&73	&69	&63	&65	&87	&60	&52	&55	&62	&48	&64	&70	&71	&72	&62\\	
	&		& Elf+	&34	&35	&53	&30 &40 &39	&43	&39	&59	&72 &54	&42	&51	&82	&48	&41	&42	&43	&35	&51	&63	&48	&66	&49\\
\cline{2-27}

& \multirow{5}{*}{\rotatebox[origin=c]{90}{\textbf{General}}} & Xz	&948	&1106	&810	&1056	&877	&836	&900	&1045	&1959	&1527	&1100	&1531	&1444	&2146	&1235	&898	&1636	&1036	&1040	&1252	&1516	&1476	&1351	&1276\\
	&		& Brotli	&1639	&1685	&1557	&1449	&1584	&1611	&1693	&1702	&2074	&1792	&1715	&1729	&1827	&1798	&1704	&1741	&1674	&1755	&1522	&1692	&1712	&1628	&1633	&1669\\
	&		& LZ4	&1082	&1106	&963	&984	&966	&976	&952	&1091	&1285	&1013	&1010	&1001	&1000	&1026	&1032	&985	&974	&1060	&976	&988	&986	&966	&957	&987\\
	&		& Zstd	&209	&\textbf{212}	&112	&\textbf{208}	&177	&112	&\textbf{117}	&218	&317	&259	&291	&271	&\textbf{256}	&277	&217	&211	&\textbf{227}	&251	&202	&236	&245	&206	&\textbf{113}	&211\\
	&		& Snappy	&\textbf{195}	&236	&\textbf{52}	&214	&\textbf{169}	&\textbf{56}	&172	&\textbf{195}	&\textbf{179}	&\textbf{189}	&\textbf{200}	&\textbf{169}	&261	&\textbf{158}	&\textbf{175}	&\textbf{188}	&250	&\textbf{190}	&\textbf{200}	&\textbf{207}	&\textbf{238}	&\textbf{178}	&149	&\textbf{200}\\
\hline
\hline

%解压时间
\multirow{11}{*}{\rotatebox[origin=c]{90}{\textbf{\scriptsize Decompression Time ($\mu$s)}}}	&  	\multirow{5}{*}{\rotatebox[origin=c]{90}{\textbf{Floating}}}	& Gorilla	&\textbf{16}	&18	&17	&21	&\textbf{16}	&17	&\textbf{17}	&\textbf{17}	&\textbf{18}	&\textbf{23}	&\textbf{18}	&\textbf{16}	&\textbf{17}	&\textbf{20}	&\textbf{18}	&\textbf{16}	&\textbf{18}	&\textbf{17}	&\textbf{16}	&\textbf{17}	&\textbf{17}	&\textbf{17}	&\textbf{17}	&\textbf{17}\\
	& 		& Chimp	&24	&22	&24	&19	&22	&24	&24	&54	&19	&30	&26	&27	&25	&25	&26	&21	&26	&24	&21	&26	&26	&24	&26	&24\\
	&		& Chimp$_{128}$	&17	&\textbf{16}	&\textbf{16}	&\textbf{15}	&18	&\textbf{16}	&18	&18	&22	&28	&21	&26	&22	&25	&20	&18	&19	&22	&17	&26	&26	&23	&24	&22\\
	&		& FPC	&28	&28	&26	&29	&25	&24	&25	&25	&32	&27	&31	&24	&26	&34	&28	&28	&29	&29	&29	&30	&36	&28	&35	&31\\
	&		& Elf	&38	&44	&46	&43	&37	&45	&44	&45	&41	&58	&53	&48	&48	&29	&44	&33	&44	&49	&39	&52	&57	&31	&33	&42\\
	&		& Elf+	&27	&28	&33	&27	&28	&29 &31	&30 &44	&41	&45	&34	&36	&35	&33	&30	&33	&33	&30	&41	&49 &33	&36	&36\\
\cline{2-27}

& \multirow{5}{*}{\rotatebox[origin=c]{90}{\textbf{General}}} & Xz	&161	&147	&114	&125	&156	&133	&148	&226	&435	&427	&284	&479	&345	&629	&272	&196	&194	&312	&126	&434	&461	&664	&663	&381\\
	&		& Brotli	&61	&58	&36	&53	&41	&43	&69	&70	&109	&97	&79	&93	&87	&100	&71	&103	&70	&86	&58	&243	&85	&86	&77	&101\\
	&		& LZ4	&40	&\textbf{35}	&\textbf{18}	&\textbf{37}	&\textbf{19}	&\textbf{19}	&\textbf{18}	&42	&56	&42	&\textbf{38}	&\textbf{40}	&\textbf{38}	&\textbf{44}	&\textbf{35}	&\textbf{36}	&\textbf{37}	&\textbf{39}	&37	&\textbf{38}	&\textbf{37}	&35	&\textbf{19}	&\textbf{35}\\
	&		& Zstd	&46	&48	&30	&42	&31	&31	&50	&45	&99	&66	&113	&72	&62	&68	&57	&45	&47	&60	&44	&47	&48	&43	&32	&46\\
	&		& Snappy	&\textbf{38}	&54	&20	&38	&\textbf{19}	&21	&20	&\textbf{39}	&\textbf{49}	&\textbf{40}	&42	&41	&46	&48	&37	&40	&39	&\textbf{39}	&\textbf{36}	&42	&\textbf{37}	&\textbf{32}	&43	&38\\
\hline

\end{tabular}
}
\end{table*}

Table~\ref{tbl:overallcomparison} shows the performance of different compression algorithms on all datasets. We group the datasets into two categories (i.e., Time Series and Non Time Series), and investigate the performance of floating-point compression algorithms and general compression algorithms on each group of datasets, respectively.

\subsubsection{Compression Ratio} 

With regard to the compression ratio, we have the following observations from Table~\ref{tbl:overallcomparison}.

% 浮点数压缩，elf通常最好。 在时序数据集中，elf相对于gorilla和FPC有51%的提升，相对于Chimp有47.1%的提升，相对于Chimp128有12%的提升。
% 在非时序数据集中，相对于gorilla有37.5%提升，相对于chimp 30.4%,相对于chimp128有12.7%；相对于FPC有34.5%提升。
(1) \textbf{{\em Elf} VS floating-point compression algorithms}. Among all the floating-point compression algorithms, {\em Elf} has the best compression ratio on almost all datasets (excluding {\em Elf}+). In particular, for the time series datasets, compared with Gorilla and FPC, {\em Elf} has an average relative improvement of $(0.76 - 0.37) / 0.76 \approx 51\%$. Chimp has optimized the coding of Gorilla, and its upgraded version Chimp$_{128}$ resorts to a hash table (up to 33KB memory occupation) for fast searching an appropriate value in previous 128 data records. Therefore, they can achieve a significant improvement over Gorilla. However, thanks to the erasing technique and elaborate XOR$_{cmp}$, {\em Elf} can still achieve relative improvement of 47\% and 12\% over Chimp and Chimp$_{128}$ respectively on the time series datasets. Note that {\em Elf} has a lower memory footprint (i.e., $\mathcal{O}(1)$) in comparison with Chimp$_{128}$. For the non time series datasets, {\em Elf} is also relatively $(0.63 - 0.55) / 0.63 \approx 12.7\%$ better than the best competitor Chimp$_{128}$. We notice that there are few datasets that Chimp$_{128}$ is slightly better than {\em Elf} in terms of compression ratio. For the datasets of WS, SUSA and BT, we find that there are many duplicate values within 128 consecutive records. In this case, Chimp$_{128}$ can use only 9 bits to represent the same value. For the datasets of AS, PLat and PLon, since they have large decimal significand counts, {\em Elf} does not perform erasing but still consumes some flag bits. As pointed out by~\cite{liakos2022chimp}, real-world floating point measurements often have a decimal place count of one or two, which usually results in small or medium $\beta$. To this end, {\em Elf} can achieve good performance in most real-world scenarios.

% 尽管大多数通用算法能够达到较好的压缩率，但是很多情况下，elf都比brotil/lz4/zstd/snappy压缩率高。通用压缩算法不适合流式压缩，而floating算法可以用于流式压缩，适用范围更广。
(2) \textbf{{\em Elf} VS general compression algorithms}. Most of the general compression algorithms have a good compression ratio. However, upon most occasions, {\em Elf} is still better than LZ4, Zstd and Snappy (with average relative improvement of 30.2\%, 7.5\% and 27.5\% respectively for the time series datasets, and 18\%, 3.5\% and 16.7\% respectively for the non time series datasets), and shows a similar performance to Xz and Brotli in terms of compression ratio. Moreover, in comparison with non time series datasets, {\em Elf} can achieve more improvement over general compression algorithms for time series datasets (e.g., 30.2\% v.s. 18\% for LZ4). It is because non time series datasets do not have a time-based ordering, which reduces the usefulness of exploiting previous values.

% beta的比较。对于beta大的数据，几乎所有方法(包括通用压缩算法)的压缩效果都不佳，因为数据很复杂。因此，对这类数据的压缩效果不好不仅仅是elf方法本身面临的问题。
(3) \textbf{Different decimal significand counts}. As shown in Table~\ref{tbl:overallcomparison}, with a larger $\beta$, both general and floating-point compression algorithms suffer from a lower compression ratio, since a larger $\beta$ means a more complex data layout. To this end, the poor compression ratio on datasets with a large $\beta$ is not just a problem for {\em Elf}. It is a common and interesting problem worthy of further exploration.

\textcolor{black}{(4) \textbf{{\em Elf}+ VS {\em Elf}}. Table~\ref{tbl:overallcomparison} shows that for both time series and non-time series with small and medium $\beta$, {\em Elf}+ always performs better than {\em Elf} with regard to compression ratio. This is because {\em Elf}+ takes full advantage of the fact that most values in a time series have the same significand count, and thus it encodes $\beta^*$ with fewer bits. Thanks to this optimization, {\em Elf}+ even outperforms the best competitor Chimp$_{128}$ for datasets WS and SUSA, in which Chimp$_{128}$ has a slightly better compression ratio than {\em Elf}. On the contrary, for values with big $\beta$, {\em Elf}+ performs a bit worse than {\em Elf}, since {\em Elf}+  utilizes two bits to indicate the case of not erasing, while {\em Elf} only takes up one bit for this case.}

\subsubsection{Compression Time and Decompression Time}

As shown in the lower parts of Table~\ref{tbl:overallcomparison}, we have the following observations.

(1) The general compression algorithms take one or two orders of magnitude of more compression time than floating-point compression algorithms on average. For example, although Xz can achieve a slightly better compression ratio than {\em Elf}, it takes as much as 200 times longer than {\em Elf}. Even for the fastest general compression algorithms Zstd and Snappy, they still take about 3 times longer than {\em Elf}, which prevents them from being applied to real-time scenarios.

(2) {\em Elf} takes a little more time than other floating-point compression algorithms during both compression and decompression processes. Compared with other floa-ting-point compression algorithms, {\em Elf} adds an erasing step and a restoring step, which inevitably takes more time. However, the difference is not obvious, since they are all on the same order of magnitude. Gorilla has the least compression time and decompression time, because it considers fewer cases (see Figure~\ref{fig:XORCompressor}(a)) compared with Chimp and Chimp$_{128}$.

(3) Compared with compression time, the distinction of decompression time among different algorithms (except for Xz) is insignificant, since most algorithms sequentially read the decompression stream directly. As a result, most algorithms focus more on the trade-off between compression ratio and compression time.

\textcolor{black}{(4) For almost all datasets, {\em Elf}+ takes less time than {\em Elf} during both compression and decompression processes. For example, on average, {\em Elf}+ takes about 79.5\% of the compression time of {\em Elf}, and this ratio turns into 80.2\% for decompression time. These improvements owe to two reasons. First, when compressing a value, {\em Elf}+ leverages the significand count of its previous value, which avoids iteratively trying to get the decimal place count from scratch. Second, in the processes of compression and decompression, to get the start position $SP(v)$, {\em Elf}+ adopts more efficient numerical checks instead of expensive logarithmic operations. We also notice that for values with larger $\beta$, the efficiency improvement of {\em Elf}+ is not so significant (sometimes it is even slightly worse than {\em Elf} due to experimental errors). This is because if $\beta^* \geq 16$, {\em Elf}+ will not store $\beta^*$; therefore, the optimization of significand counts will not take effect.}

\subsubsection{Summary} In summary, {\em Elf} can usually achieve remarkable compression ratio improvement for both time series datasets and non time series datasets, with the affordable cost of more time. \textcolor{black}{{Furthermore, \em Elf}+ performs better than {\em Elf} in terms both of compression ratio and running time.}

% 分析什么时候用得到Elf
One interesting question is how much efficiency gain can we benefit from {\em Elf} \textcolor{black}{or {\em Elf}+} over the best competitor, i.e., $Chimp_{128}$? Consider a scenario of data transmission. Suppose the raw data size is $D$, the compression ratio is $\eta$, and the rates of compression, decompression and transmission are $r_{cmp}$, $r_{dcmp}$ and $r_{tr}$, respectively. The latency of the whole data from sending to receiving is: $t = D/r_{cmp} + D/r_{dcmp} + D \times \eta /r_{tr}$. According to Table~\ref{tbl:overallcomparison}, in terms of the average metrics for time series, we have $r_{cmp}^{Elf} = 1000 \times 64 / (60 \times 10^{-6}) \approx 1.07 \times 10^9$ bits/s, $r_{dcmp}^{Elf} = 1000 \times 64 / (44 \times 10^{-6}) \approx 1.45 \times 10^9$ bits/s, and $\eta^{Elf} = 0.37$. Similarly, $r_{cmp}^{Chimp_{128}} = 2 \times 10^9$ bits/s, $r_{dcmp}^{Chimp_{128}} = 3.2 \times 10^{9}$ bits/s, and $\eta^{Chimp_{128}} = 0.42$. Therefore, $t^{Elf} / t^{Chimp_{128}} \approx  (1.62 + 0.37 \times 10^9 / r_{tr}) / \\ (0.81 + 0.42 \times 10^9 / r_{tr})$, where $r_{tr}^{Elf} = r_{tr}^{Chimp_{128}} = r_{tr}$.  Let $t^{Elf} / t^{Chimp_{128}} < 1$, we have $r_{tr} < 6.17 \times 10^7$ bits/s. That is, when the transmission rate is smaller than $ 6.17 \times 10^7$ bits/s, the overall performance of {\em Elf} is supposed to be better than that of Chimp$_{128}$. \textcolor{black}{By adopting the same approach, we can draw a conclusion that the overall performance of {\em Elf}+ is supposed to be better than that of Chimp$_{128}$ if the transmission rate is smaller than $1.96 \times 10^8$ bits/s.}

We want to emphasize two points here. First, in a typical client-server architecture, the bandwidth and memory in the server are rather precious resources, and the bandwidth for a connection rarely exceeds $6.17 \times 10^7$ bits/s \textcolor{black}{(let alone $1.96 \times 10^8$ bits/s)}. Moreover, for each connection, Chimp$_{128}$ would allocate 33KB memory, which is unaffordable for high concurrency scenarios. Second, we find that the most time-consuming part of {\em Elf} \textcolor{black}{or {\em Elf}+} is to calculate $\beta$ \textcolor{black}{or the start position} of a floating-point value. If we could calculate \textcolor{black}{them} faster, the efficiency would be further enhanced tremendously. Maybe in the future we can design a special hardware or a special computer instruction to achieve this.

\subsection{Performance with Different $\beta$  \textcolor{black}{for Double Values}}

\begin{figure}[t]
\centering
%\centerline%
{%
	\subfigure[Compression Ratio in AS.]{%
		\includegraphics[width=1.6in]{./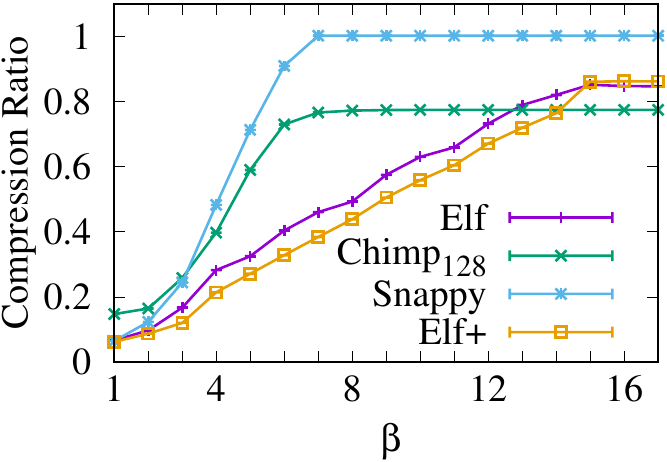}%
	}%
	\hfill%
	\subfigure[Compression Ratio in PLon.]{%
		\includegraphics[width=1.6in]{./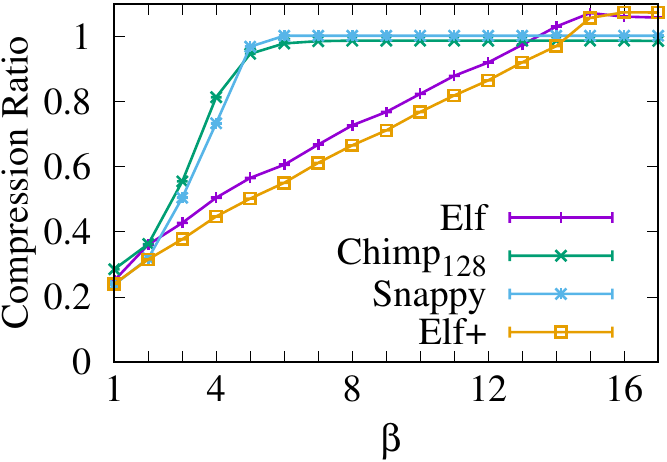}%
	}%

	\subfigure[Compression Time in AS.]{%
		\includegraphics[width=1.6in]{./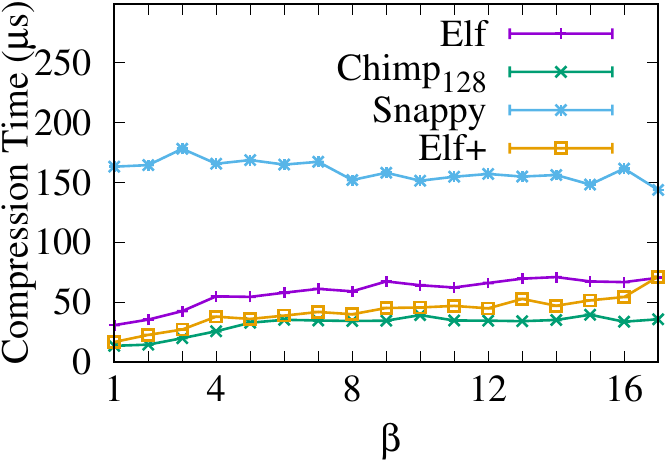}%
	}%
	\hfill%
	\subfigure[Compression Time in PLon.]{%
		\includegraphics[width=1.6in]{./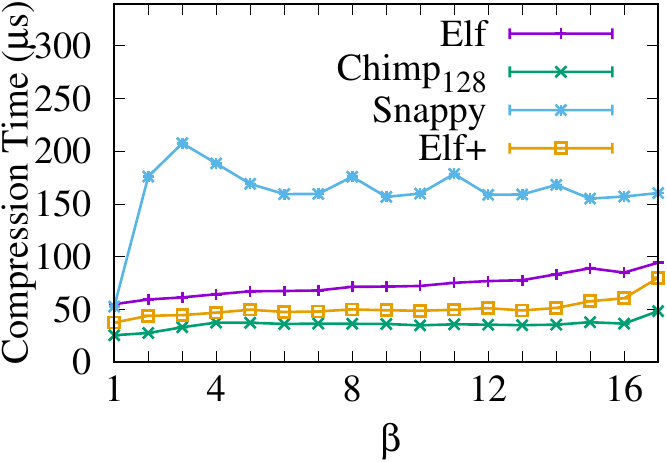}%
	}%
		
	\subfigure[Decompression Time in AS.]{%
		\includegraphics[width=1.6in]{./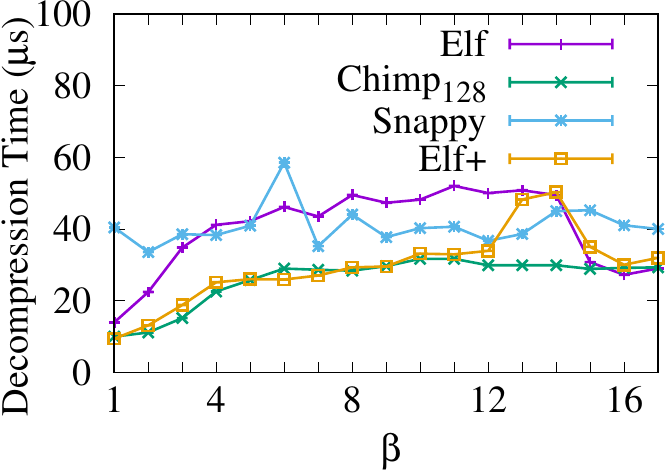}%
	}%	
	\hfill%
	\subfigure[Decompression Time in PLon.]{%
		\includegraphics[width=1.6in]{./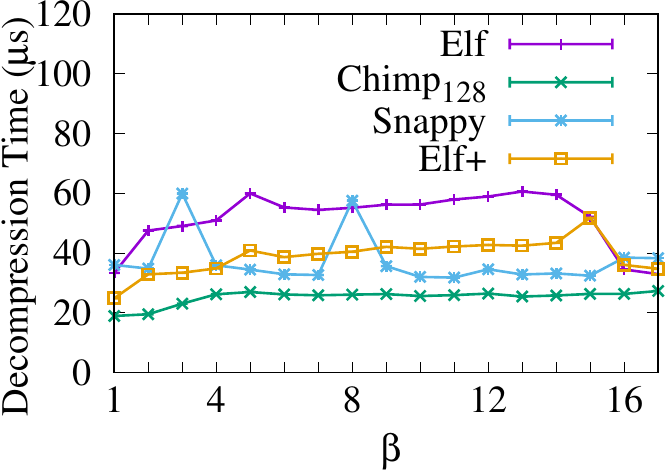}%
	}%
}%
\caption{Performance with Different $\beta$ \textcolor{black}{(Double Values)}.} %
\label{fig:exp:differentbeta}%
\end{figure}%

To further investigate the effect of $\beta$, we conduct a set of experiments by gradually reducing the decimal significand counts of a time series dataset AS and a non time series dataset PLon. We select Chimp$_{128}$ and Snappy as baselines, since they achieve the best trade-off between the compression ratio and compression time among the floating-point competitors and general competitors respectively. 

As shown in Figure~\ref{fig:exp:differentbeta}(a) and Figure~\ref{fig:exp:differentbeta}(b), with an increasing $\beta$ from 1 to 15, the compression ratio of {\em Elf} increases linearly, which is consistent with Theorem~\ref{theorem:effectiveness}. When $\beta$ is greater than 15, the compression ratio of {\em Elf} keeps stable, because {\em Elf} does not perform the erasing step if $\beta > 15$. For Chimp$_{128}$ and Snappy, with the increase of $\beta$, their compression ratios first increase steeply and then keep stable when $\beta > 6$. On both AS and PLon, {\em Elf} always has the best compression ratio \textcolor{black}{compared with Chimp$_{128}$ and Snappy} if $\beta$ is between 3 and 13. When $\beta = 6$, the compression ratio gain of {\em Elf} over Chimp$_{128}$ and Snappy achieves the highest (33\% and 55\% relative improvement in AS, and 40.2\% and 41.6\% relative improvement in PLon, respectively). For the time series dataset AS, {\em Elf} always performs better than Snappy, because {\em Elf} can capture the time ordering characteristic. \textcolor{black}{{\em Elf}+ has a similar compression ratio trend to {\em Elf}. When $\beta < 15$, {\em Elf}+ always performs better than {\em Elf} on both datasets. When $\beta \geq 15$, {\em Elf}+ performs slightly worse than {\em Elf}, as {\em Elf}+ utilizes two bits to indicate the case of not erasing, while {\em Elf} uses only one bit for this case.}

Figures~\ref{fig:exp:differentbeta}(c-f) present the compression time and decompression time of the \textcolor{black}{four} algorithms on the two datasets, respectively. With a larger $\beta$ that $\beta < 15$, the compression time and decompression time of both {\em Elf} and Chimp$_{128}$ get larger, because they need to write or read more streams. Things have changed for Snappy because it contains a complex dictionary building step. When $\beta \geq 15$, the decompression time of {\em Elf} drops sharply, because it skips the restoring step. On both datasets, {\em Elf} takes slightly more compression time than Chimp$_{128}$, but much less than Snappy. Besides, although {\em Elf} takes about double decompression time of Chimp$_{128}$, it is still less than $60 \mu$s for all values of $\beta$. \textcolor{black}{ {\em Elf}+ shows similar trends to {\em Elf} in terms both of compression time and decompression time, but it takes less time for almost all values of $\beta$.}

\begin{figure}[t]
\centering
%\centerline%
{%
	\subfigure[Time Series (Small $\beta$).]{%
		\includegraphics[width=1.6in]{./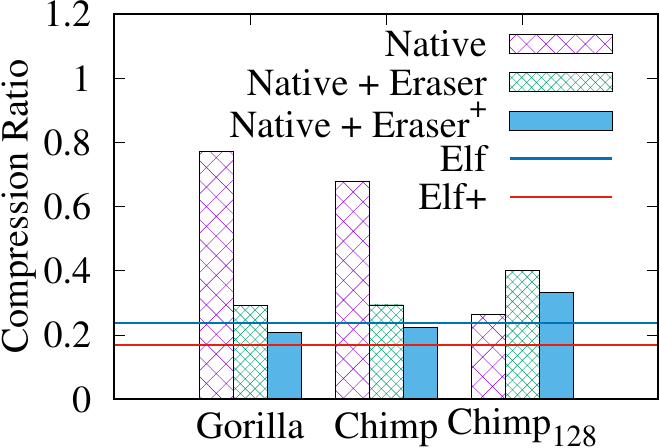}%
	}%
	\hfill%
	\subfigure[Non Time Series (Small $\beta$).]{%
		\includegraphics[width=1.6in]{./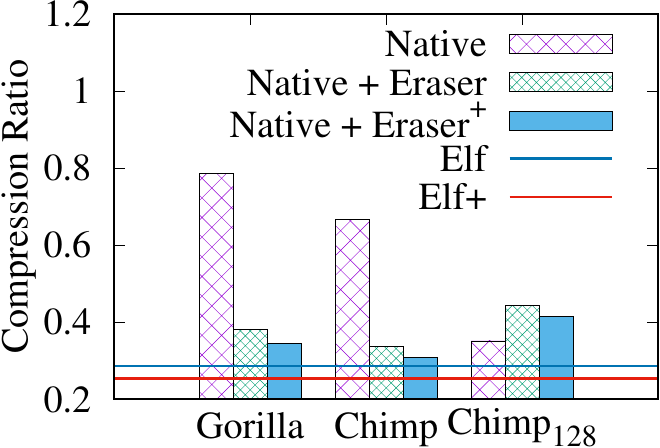}%
	}%
		
	\subfigure[Time Series (Medium $\beta$).]{%
		\includegraphics[width=1.6in]{./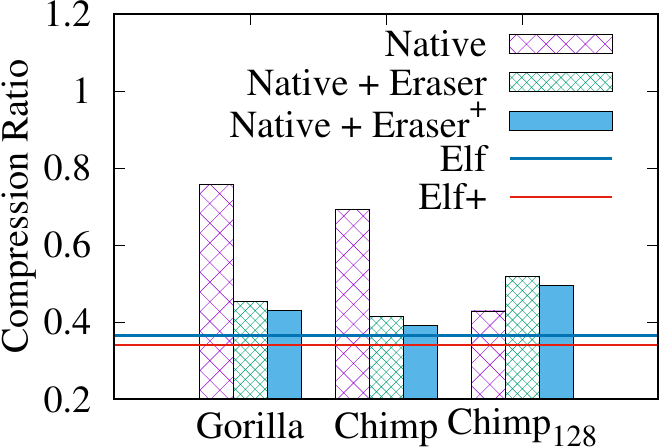}%
	}%
	\hfill%
	\subfigure[Non Time Series (Medium $\beta$).]{%
		\includegraphics[width=1.6in]{./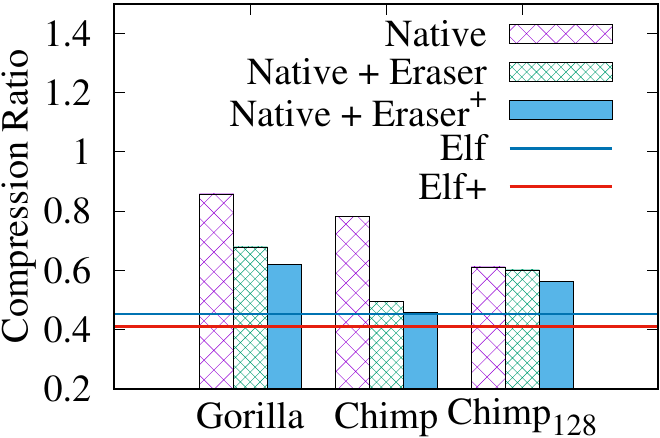}%
	}%
	
	\subfigure[Time Series (Large $\beta$).]{%
		\includegraphics[width=1.6in]{./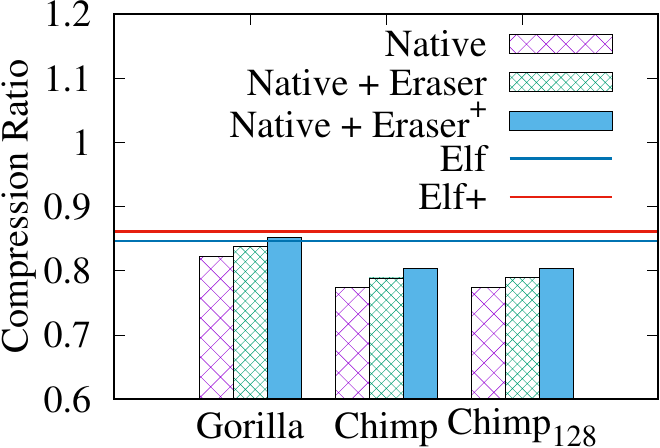}%
	}%
	\hfill%
	\subfigure[Non Time Series (Large $\beta$).]{%
		\includegraphics[width=1.6in]{./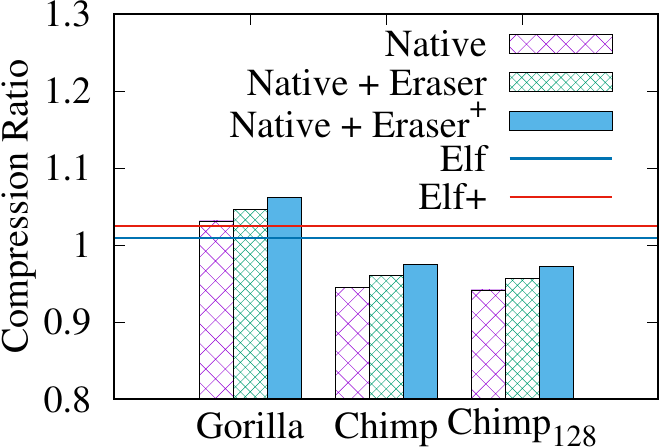}%
	}%
}%
\caption{Compression Ratio Improvement of Erasing and XOR$_{cmp}$ Strategies \textcolor{black}{(Double Values)}.} %
\label{fig:exp:CRimprovement}%
\end{figure}%

\subsection{Validation of Erasing and XOR$_{cmp}$ Strategies}

To verify the effectiveness of the erasing strategy, we regard {\em Elf} Eraser \textcolor{black}{(or {\em Elf}+ Eraser)} as a preprocessing operation on Gorilla, Chimp and Chimp$_{128}$. Figures~\ref{fig:exp:CRimprovement}(a-f) present the average compression ratio improvement over the native methods in three groups of $\beta$. It is observed that: 

(1)~For both time series datasets and non time series datasets with small or medium $\beta$, \textcolor{black}{both of our proposed erasing strategies} can improve the compression ratio of Gorilla and Chimp dramatically. In particular, if $\beta$ is small, with the equipment of {\em Elf} Eraser \textcolor{black}{(or {\em Elf}+ Eraser)}, Gorilla can obtain a relative improvement of 62.2\% and 51.6\% \textcolor{black}{(or 73\% and 56.1\%)} on the time series datasets and non-time series datasets, respectively, while Chimp can also enjoy a relative improvement of 56.8\% and 49.5\% \textcolor{black}{(or 66.9\% and 53.8\%)}, respectively.

(2)~Chimp$_{128}$ can be hardly enhanced by {\em Elf} Eraser \textcolor{black}{and {\em Elf}+ Eraser}. This is because Chimp$_{128}$ leverages the least 14 significant mantissa bits as its hash key. After erasing the mantissa, it is hard for Chimp$_{128}$ to find an appropriate previous value, which might result in an XORed value with a small number of leading zeros. Besides, keeping track of the positions of the chosen values consumes additional bits. As a result, unlike Chimp$_{128}$, {\em Elf} \textcolor{black}{and {\em Elf}+} consider only the neighboring values.

(3)~For datasets with large $\beta$, {\em Elf} Eraser \textcolor{black}{and {\em Elf}+ Eraser} cannot enhance the XOR-based compressors, because for large $\beta$, {\em Elf} Eraser \textcolor{black}{and {\em Elf}+ Eraser} give up erasing to avoid a negative gain.

(4)~If $\beta$ is not large, {\em Elf} \textcolor{black}{(or {\em Elf}+)} is still 8.7\%$\sim$33.3\% \textcolor{black}{(or 10.3\%$\sim$49.3\%)} better than the Eraser-enhanced \textcolor{black}{(or Eraser$^+$-enhanced)} Gorilla and Chimp, which verifies the effectiveness of the optimization for XOR$_{cmp}$.

\subsection{\textcolor{black}{Performance for Single Values}}

\textcolor{black}{We also conduct a set of experiments to verify the performance of the proposed algorithms on single values. For this set of experiments, we use only the datasets with $\beta \leq 7$, since the significand count of a single value would not be greater than 7. FPC does not provide a version of single values, so we do not compare it.}

\textcolor{black}{As shown in Table~\ref{tbl:expForSingle}, although {\em Elf} has a similar compression ratio with that of the best floating-point competitor Chimp$_{128}$, {\em Elf}+ still enjoys the best compression ratio among all the floating-point compression methods. Specifically, compared with Chimp$_{128}$, {\em Elf}+ achieves an average relative compression ratio improvement of 12.8\% and 5.5\% on time series datasets and non time series datasets respectively. Besides, compared with the general compression algorithms, {\em Elf}+ has a better compression ratio than most of them (i.e., LZ4, Zstd and Snappy) and takes significantly less time than all of them. Moreover, like for double values, {\em Elf}+ outperforms {\em Elf} in terms all of compression ratio, compression time and decompression time for single values.}

\textcolor{black}{It is also observed that the compression ratios of {\em Elf} and {\em Elf}+ for single values are slightly worse than those of them for double values, respectively, but their compression/decompression times are not much different. For example, the average compression ratio of {\em Elf}+ is 0.33 for time series of double values, but it turns into 0.41 for time series of single values. This is because single values take up much fewer mantissa bits than double values, and thus we can only erase fewer bits for single values. In fact, other methods including floating-point specific compression algorithms and general compression algorithms show the same results.}

\setlength{\tabcolsep}{0.33em} % for the horizontal padding
\begin{table}[t]
\caption{\textcolor{black}{Average performance for single values (the best values in each group are in \textbf{bold}). The compression ratio (CR), compression time (CT) and decompression time (DT) are the average measurements on one block (i.e., 1,000 values).}}
\arrayrulecolor{black}%增加
\centering\label{tbl:expForSingle}
\resizebox{3.33in}{!}{
\begin{tabular}{|c|c||c|c|c||c|c|c|} 
\hline
\multicolumn{2}{|c||}{\multirow{2}{*}{\textbf{Dataset}}}	&\multicolumn{3}{c||}{\textbf{Time Series}}	&\multicolumn{3}{c|}{\textbf{Non Time Series}}\\
\cline{3-8}

\multicolumn{2}{|c||}{}	&CR		&	CT ($\mu$s)	& DT ($\mu$s)	& CR	& CT ($\mu$s)	&DT ($\mu$s)\\

\hline
\hline

%浮点压缩算法
\multirow{5}{*}{\rotatebox[origin=c]{90}{\textbf{Floating}}}	& Gorilla	&0.66	&\textbf{18.0}	&\textbf{15.3}	&0.85&\textbf{19.3}	&\textbf{15.8}	\\
& Chimp	&0.57	&19.8	&16.9	&0.78&23.4	&19.0	\\
& Chimp$_{128}$	&0.47	&26.4	&17.6	&0.73&33.3	&20.1	\\
& Elf	&0.46	&56.4	&43.1	&0.74	&63.6	&47.9	\\
& Elf+	&\textbf{0.41}	&41.4	&32.0	&\textbf{0.69}&51.5	&37.1	\\
\hline
\hline

%通用压缩算法
\multirow{5}{*}{\rotatebox[origin=c]{90}{\textbf{General}}}		& Xz	&\textbf{0.36}	&979.5	&175.6	&\textbf{0.60}	&1054.0	&247.2	\\
& Brotli	&0.40	&1660.5	&89.3	&0.63	&1588.8	&80.0	\\
& LZ4	&0.72	&1064.5	&42.6	&0.80	&1004.6	&39.3	\\
& Zstd	&0.44	&229.7	&66.2	&0.65	&226.2	&55.1	\\
& Snappy	&0.69	&\textbf{187.1}	&\textbf{41.9}	&0.83	&\textbf{183.7}	&\textbf{36.4}	\\
\hline
\end{tabular}
}
\end{table}

\section{Related Works}\label{sec:related}

%In this section, we investigate the related works from three aspects: General Compression, Lossy Floating-Point Compression and Lossless Floating-Point Compression.

\subsection{General Compression}

There are a wide range of impressive compression methods for general purposes, such as Xz~\cite{Xz}, Brotli~\cite{alakuijala2018brotli}, LZ4~\cite{collet2013lz4}, Zstd~\cite{collet2016zstd} and Snappy~\cite{snappy}. Zstd combines a dictionary-matching stage with a fast entropy-coding stage. The dictionary is trainable and can be generated from a set of samples. Snappy also refers to a dictionary and stores the shift from the current position back to uncompressed stream. Both Zstd and Snappy can achieve a good trade-off between compression ratio and efficiency. Most general compression methods are lossless and can achieve a good compression ratio, but they do not leverage the characteristics of floating-point values and cannot be applied directly to streaming scenarios~\cite{li2020discovering} either.

\subsection{Lossy Floating-Point Compression}

Since floating-point data is stored in a complex format, it is challenging to compress floating-point data without losing any precision. To this end, many lossy floating-point compression methods are proposed~\cite{lazaridis2003capturing, liang2022sz3, lindstrom2014fixed, zhao2022mdz, zhao2021optimizing, liu2021high, liu2021decomposed}. For example, the representative method ZFP~\cite{lindstrom2014fixed} compresses regularly gridded data with a certain loss guarantee. MDZ~\cite{zhao2022mdz} is an adaptive error-bounded lossy compression framework that optimizes the compression for two execution models of molecular dynamics. However, these lossy compression methods are usually application specific. Moreover, many scenarios, especially in the fields of scientific calculation and databases~\cite{xiao2022time, li2020just, Yu2021distributed, bao2016managing}, do not tolerate any loss of precision.

\subsection{Lossless Floating-Point Compression}

Most lossless floating-point compression algorithms are based on prediction. The distinction among them lies in two aspects: 1)~How does the predictor work? 2)~How to handle the difference between the predicted value and the real one? 

Based on the former, lossless floating-point compression algorithms can be further divided into model-based methods~\cite{ratanaworabhan2006fast, yu2020two, burtscher2007high, burtscher2008fpc, jensen2018modelardb, jensen2021scalable, blalock2018sprintz} and previous-value methods~\cite{liakos2022chimp, pelkonen2015gorilla}. DFCM~\cite{ratanaworabhan2006fast} maps floating-point values to unsigned integers and predicts the values by a DFCM (differential finite context method) predictor. However, DFCM only works well for smoothly changing data. FPC~\cite{burtscher2007high, burtscher2008fpc} sequentially predicts each value in a streaming fashion using two context-based predictors, i.e., FCM predictor~\cite{sazeides1997predictability} and DFCM predictor (which is quite different from that in DFCM~\cite{ratanaworabhan2006fast}). Among the predicted values obtained by the two predictors, FPC chooses the closer one, and thus it can achieve a better prediction performance. Some other model-based methods~\cite{jensen2018modelardb, jensen2021scalable, yu2020two} capture the characteristics of different series using machine learning models, and eventually choose the best compression approach. Due to the high cost of prediction, Gorilla~\cite{pelkonen2015gorilla} and Chimp~\cite{liakos2022chimp} directly regard the previous one value as the predicted one, based on the observation that two consecutive values do not change much. Chimp$_{128}$ is an upgraded version of Chimp, which exploits 128 earlier values to find the best matched value. To expedite the computation efficiency, Chimp$_{128}$ maintains a hash table with size of 33KB, which might be not applicable in edge computing scenarios~\cite{shi2016edge, mao2017survey}.

Based on the latter, a small number of methods~\cite{engelson2000lossless} first map the differences between the predicted values and actual values to integers, and then compress the integers using integer-oriented compression techniques such as Delta encoding~\cite{pelkonen2015gorilla}. On the contrary, a majority of methods~\cite{burtscher2008fpc, liakos2022chimp, pelkonen2015gorilla} encode their XORed values instead of the differences. Gorilla~\cite{pelkonen2015gorilla} assumes that the XORed values would contain both long leading zeros and long trailing zeros with high probability, so it uses 5 bits to record the number of leading zeros and 6 bits to store the number of trailing zeros. Chimp~\cite{liakos2022chimp} points out the fact that the XORed values rarely have long trailing zeros, so it is ineffective for Gorilla to take up to 6 bits to record the number of trailing zeros. Therefore, Chimp optimizes the encoding strategy for the XORed values and can use fewer bits. 

As a lossless compression solution, {\em Elf} belongs to a previous-value method and encodes the XORed values. However, different from Gorilla and Chimp, {\em Elf} performs an erasing operation on the floating-point values before XORing them, which makes the XORed values contain many trailing zeros. Besides, {\em Elf} designs a novel encoding strategy for the XORed values with many trailing zeros, which achieves a notable compression ratio.

\section{Conclusion and Future Work}\label{sec:conclude}

This paper \textcolor{black}{first puts forward} a novel, compact and efficient erasing-based lossless floating-point compression algorithm {\em Elf}, \textcolor{black}{and then proposes an upgraded version of it named {\em Elf}+ by optimizing the significand count encoding strategy}.
Extensive experiments using 22 datasets verify the powerful performance of {\em Elf} \textcolor{black}{and {\em Elf}+ for both double values and single values}. In particular, \textcolor{black}{for double values,} {\em Elf} achieves average relative compression ratio improvement of 12.4\% and 43.9\% over Chimp$_{128}$ and Gorilla, respectively. Besides, {\em Elf} has a similar compression ratio to the best compared general compression algorithm but with much less time. \textcolor{black}{Furthermore, {\em Elf}+ outperforms {\em Elf} by an average relative compression ratio improvement of 7.6\% and compression time improvement of 20.5\%.}
In our future work, we plan to optimize {\em Elf} for specific data types, such as trajectories.

\begin{acknowledgements}
This work was supported by the National Natural Science Foundation of China (62202070, 61976\\168, 62172066, 62076191) and China Postdoctoral Science Foundation (2022M720567).
\end{acknowledgements}

% BibTeX users please use one of
%\bibliographystyle{spbasic}      % basic style, author-year citations
\bibliographystyle{spmpsci}      % mathematics and physical sciences
\bibliography{ref}   % name your BibTeX data base

\begin{thebibliography}{10}
\providecommand{\url}[1]{{#1}}
\providecommand{\urlprefix}{URL }
\expandafter\ifx\csname urlstyle\endcsname\relax
  \providecommand{\doi}[1]{DOI~\discretionary{}{}{}#1}\else
  \providecommand{\doi}{DOI~\discretionary{}{}{}\begingroup
  \urlstyle{rm}\Url}\fi

\bibitem{BlockchianTr}
Blockchair database dumps (2023).
\newblock Retrieved March 19, 2023 from
  \url{https://gz.blockchair.com/bitcoin/transactions/}

\bibitem{CityTemp}
Daily temperature of major cities (2023).
\newblock Retrieved March 19, 2023 from
  \url{https://www.kaggle.com/sudalairajkumar/daily-temperature-of-major-cities}

\bibitem{VehicleCharge}
Electric vehicle charging dataset (2023).
\newblock Retrieved March 19, 2023 from
  \url{https://www.kaggle.com/datasets/michaelbryantds
  /electric-vehicle-charging-dataset}

\bibitem{Elf}
Elf floating-point compression (2023).
\newblock Retrieved March 19, 2023 from
  \url{https://github.com/Spatio-Temporal-Lab/elf}

\bibitem{Stocks}
Financial data set used in infore project (2023).
\newblock Retrieved March 19, 2023 from \url{https://zenodo.org/record/3886895}

\bibitem{WorldFoodPrice}
Global food prices database (wfp) (2023).
\newblock Retrieved March 19, 2023 from
  \url{https://data.humdata.org/dataset/wfp-food-prices}

\bibitem{Basel}
Historical weather data download (2023).
\newblock Retrieved March 19, 2023 from
  \url{https://www.meteoblue.com/en/weather/ archive/export/basel\_switzerland}

\bibitem{influxdb2data}
Influxdb 2.0 sample data (2023).
\newblock Retrieved March 19, 2023 from
  \url{https://github.com/influxdata/influxdb2-sample-data}

\bibitem{POI}
Points of interest poi database (2023).
\newblock Retrieved March 19, 2023 from
  \url{https://www.kaggle.com/datasets/ehallmar/points-of-interest-poi-database}

\bibitem{SSD}
Ssd and hdd benchmarks (2023).
\newblock Retrieved March 19, 2023 from
  \url{https://www.kaggle.com/datasets/alanjo/ssd-and-hdd-benchmarks}

\bibitem{citylat}
World city (2023).
\newblock Retrieved March 19, 2023 from
  \url{https://www.kaggle.com/datasets/kuntalmaity/ world-city}

\bibitem{Xz}
The .xz file format (2023).
\newblock Retrieved March 19, 2023 from
  \url{https://tukaani.org/xz/xz-file-format.txt}

\bibitem{alakuijala2018brotli}
Alakuijala, J., Farruggia, A., Ferragina, P., Kliuchnikov, E., Obryk, R.,
  Szabadka, Z., Vandevenne, L.: Brotli: A general-purpose data compressor.
\newblock ACM Transactions on Information Systems (TOIS) \textbf{37}(1), 1--30
  (2018)

\bibitem{bao2016managing}
Bao, J., Li, R., Yi, X., Zheng, Y.: Managing massive trajectories on the cloud.
\newblock In: Proceedings of the 24th ACM SIGSPATIAL International Conference
  on Advances in Geographic Information Systems, pp. 1--10 (2016)

\bibitem{bentley1975multidimensional}
Bentley, J.L.: Multidimensional binary search trees used for associative
  searching.
\newblock Communications of the ACM \textbf{18}(9), 509--517 (1975)

\bibitem{blalock2018sprintz}
Blalock, D., Madden, S., Guttag, J.: Sprintz: Time series compression for the
  internet of things.
\newblock Proceedings of the ACM on Interactive, Mobile, Wearable and
  Ubiquitous Technologies \textbf{2}(3), 1--23 (2018)

\bibitem{burtscher2007high}
Burtscher, M., Ratanaworabhan, P.: High throughput compression of
  double-precision floating-point data.
\newblock In: 2007 Data Compression Conference (DCC'07), pp. 293--302. IEEE
  (2007)

\bibitem{burtscher2008fpc}
Burtscher, M., Ratanaworabhan, P.: Fpc: A high-speed compressor for
  double-precision floating-point data.
\newblock IEEE Transactions on Computers \textbf{58}(1), 18--31 (2008)

\bibitem{collet2016zstd}
Collet, Y.: Zstd github repository from facebook (2016).
\newblock Retrieved March 19, 2023 from \url{https://github.com/facebook/zstd}

\bibitem{collet2013lz4}
Collet, Y., et~al.: Lz4: Extremely fast compression algorithm.
\newblock code. google. com  (2013).
\newblock Retrieved March 19, 2023 from \url{https://github.com/lz4/lz4}

\bibitem{engelson2000lossless}
Engelson, V., Fritzson, D., Fritzson, P.: Lossless compression of high-volume
  numerical data from simulations.
\newblock In: Proc. Data Compression Conference (2000)

\bibitem{snappy}
Google: Snappy | a fast compressor/decompressor (2023).
\newblock Retrieved March 19, 2023 from \url{https://github.com/google/snappy}

\bibitem{he2022trass}
He, H., Li, R., Ruan, S., He, T., Bao, J., Li, T., Zheng, Y.: Trass: Efficient
  trajectory similarity search based on key-value data stores.
\newblock In: 2022 IEEE 38th International Conference on Data Engineering
  (ICDE), pp. 2306--2318. IEEE (2022)

\bibitem{huffman1952method}
Huffman, D.A.: A method for the construction of minimum-redundancy codes.
\newblock Proceedings of the IRE \textbf{40}(9), 1098--1101 (1952)

\bibitem{jensen2017time}
Jensen, S.K., Pedersen, T.B., Thomsen, C.: Time series management systems: A
  survey.
\newblock IEEE Transactions on Knowledge and Data Engineering \textbf{29}(11),
  2581--2600 (2017)

\bibitem{jensen2018modelardb}
Jensen, S.K., Pedersen, T.B., Thomsen, C.: Modelardb: Modular model-based time
  series management with spark and cassandra.
\newblock Proceedings of the VLDB Endowment \textbf{11}(11), 1688--1701 (2018)

\bibitem{jensen2021scalable}
Jensen, S.K., Pedersen, T.B., Thomsen, C.: Scalable model-based management of
  correlated dimensional time series in modelardb+.
\newblock In: 2021 IEEE 37th International Conference on Data Engineering
  (ICDE), pp. 1380--1391. IEEE (2021)

\bibitem{kahan1996lecture}
Kahan, W.: Ieee standard 754 for binary floating-point arithmetic.
\newblock Lecture Notes on the Status of IEEE \textbf{754}(94720-1776), 11
  (1996)

\bibitem{lazaridis2003capturing}
Lazaridis, I., Mehrotra, S.: Capturing sensor-generated time series with
  quality guarantees.
\newblock In: Proceedings 19th International Conference on Data Engineering
  (Cat. No. 03CH37405), pp. 429--440. IEEE (2003)

\bibitem{li2020discovering}
Li, R., Bao, J., He, H., Ruan, S., He, T., Hong, L., Jiang, Z., Zheng, Y.:
  Discovering real-time reachable area using trajectory connections.
\newblock In: International Conference on Database Systems for Advanced
  Applications, pp. 36--53. Springer (2020)

\bibitem{li2020just}
Li, R., He, H., Wang, R., Huang, Y., Liu, J., Ruan, S., He, T., Bao, J., Zheng,
  Y.: Just: Jd urban spatio-temporal data engine.
\newblock In: 2020 IEEE 36th International Conference on Data Engineering
  (ICDE), pp. 1558--1569. IEEE (2020)

\bibitem{li2021trajmesa}
Li, R., He, H., Wang, R., Ruan, S., He, T., Bao, J., Zhang, J., Hong, L.,
  Zheng, Y.: Trajmesa: A distributed nosql-based trajectory data management
  system.
\newblock IEEE Transactions on Knowledge and Data Engineering \textbf{35}(1),
  1013--1027 (2021)

\bibitem{li2020trajmesa}
Li, R., He, H., Wang, R., Ruan, S., Sui, Y., Bao, J., Zheng, Y.: Trajmesa: A
  distributed nosql storage engine for big trajectory data.
\newblock In: 2020 IEEE 36th International Conference on Data Engineering
  (ICDE), pp. 2002--2005. IEEE (2020)

\bibitem{li2023elf}
Li, R., Li, Z., Wu, Y., Chen, C., Zheng, Y.: Elf: Erasing-based lossless
  floating-point compression.
\newblock Proceedings of the VLDB Endowment \textbf{16} (2023)

\bibitem{li2022apache}
Li, R., Zhang, L., Pan, J., Liu, J., Wang, P., Sun, N., Wang, S., Chen, C., Gu,
  F., Guo, S.: Apache shardingsphere: A holistic and pluggable platform for
  data sharding.
\newblock In: 2022 IEEE 38th International Conference on Data Engineering
  (ICDE), pp. 2468--2480. IEEE (2022)

\bibitem{li2015internet}
Li, S., Xu, L.D., Zhao, S.: The internet of things: a survey.
\newblock Information systems frontiers \textbf{17}(2), 243--259 (2015)

\bibitem{liakos2022chimp}
Liakos, P., Papakonstantinopoulou, K., Kotidis, Y.: Chimp: efficient lossless
  floating point compression for time series databases.
\newblock Proceedings of the VLDB Endowment \textbf{15}(11), 3058--3070 (2022)

\bibitem{liang2022sz3}
Liang, X., Zhao, K., Di, S., Li, S., Underwood, R., Gok, A.M., Tian, J., Deng,
  J., Calhoun, J.C., Tao, D., et~al.: Sz3: A modular framework for composing
  prediction-based error-bounded lossy compressors.
\newblock IEEE Transactions on Big Data \textbf{9}(2), 485--498 (2022)

\bibitem{lindstrom2014fixed}
Lindstrom, P.: Fixed-rate compressed floating-point arrays.
\newblock IEEE transactions on visualization and computer graphics
  \textbf{20}(12), 2674--2683 (2014)

\bibitem{liu2021decomposed}
Liu, C., Jiang, H., Paparrizos, J., Elmore, A.J.: Decomposed bounded floats for
  fast compression and queries.
\newblock Proceedings of the VLDB Endowment \textbf{14}(11), 2586--2598 (2021)

\bibitem{liu2021high}
Liu, T., Wang, J., Liu, Q., Alibhai, S., Lu, T., He, X.: High-ratio lossy
  compression: Exploring the autoencoder to compress scientific data.
\newblock IEEE Transactions on Big Data \textbf{9}(1), 22--36 (2021)

\bibitem{mao2017survey}
Mao, Y., You, C., Zhang, J., Huang, K., Letaief, K.B.: A survey on mobile edge
  computing: The communication perspective.
\newblock IEEE communications surveys \& tutorials \textbf{19}(4), 2322--2358
  (2017)

\bibitem{WindDir}
{National Ecological Observatory Network (NEON)}: 2d wind speed and direction
  (dp1.00001.001) (2022).
\newblock \doi{10.48443/77N6-EH42}.
\newblock Retrieved March 19, 2023 from
  \url{https://data.neonscience.org/data-products/DP1.00001.001/RELEASE-2022}

\bibitem{AirPressure}
{National Ecological Observatory Network (NEON)}: Barometric pressure
  (dp1.00004.001) (2022).
\newblock \doi{10.48443/ZR37-0238}.
\newblock Retrieved March 19, 2023 from
  \url{https://data.neonscience.org/data-products/DP1.00004.001/RELEASE-2022}

\bibitem{PM10Dust}
{National Ecological Observatory Network (NEON)}: Dust and particulate size
  distribution (dp1.00017.001) (2022).
\newblock \doi{10.48443/RDZ9-XR84}.
\newblock Retrieved March 19, 2023 from
  \url{https://data.neonscience.org/data-products/DP1.00017.001/RELEASE-2022}

\bibitem{IRBioTemp}
{National Ecological Observatory Network (NEON)}: Ir biological temperature
  (dp1.00005.001) (2022).
\newblock \doi{10.48443/7RS6-FF56}.
\newblock Retrieved March 19, 2023 from
  \url{https://data.neonscience.org/data-products/DP1.00005.001/RELEASE-2022}

\bibitem{DewpointTemp}
{National Ecological Observatory Network (NEON)}: Relative humidity above water
  on-buoy (dp1.20271.001) (2022).
\newblock \doi{10.48443/1W06-WM51}.
\newblock Retrieved March 19, 2023 from
  \url{https://data.neonscience.org/data-products/DP1.20271.001/RELEASE-2022}

\bibitem{nguyen20216g}
Nguyen, D.C., Ding, M., Pathirana, P.N., Seneviratne, A., Li, J., Niyato, D.,
  Dobre, O., Poor, H.V.: 6g internet of things: A comprehensive survey.
\newblock IEEE Internet of Things Journal \textbf{9}(1), 359--383 (2021)

\bibitem{pelkonen2015gorilla}
Pelkonen, T., Franklin, S., Teller, J., Cavallaro, P., Huang, Q., Meza, J.,
  Veeraraghavan, K.: Gorilla: A fast, scalable, in-memory time series database.
\newblock Proceedings of the VLDB Endowment \textbf{8}(12), 1816--1827 (2015)

\bibitem{ratanaworabhan2006fast}
Ratanaworabhan, P., Ke, J., Burtscher, M.: Fast lossless compression of
  scientific floating-point data.
\newblock In: Data Compression Conference (DCC'06), pp. 133--142. IEEE (2006)

\bibitem{sazeides1997predictability}
Sazeides, Y., Smith, J.E.: The predictability of data values.
\newblock In: Proceedings of 30th Annual International Symposium on
  Microarchitecture, pp. 248--258. IEEE (1997)

\bibitem{shi2016edge}
Shi, W., Cao, J., Zhang, Q., Li, Y., Xu, L.: Edge computing: Vision and
  challenges.
\newblock IEEE internet of things journal \textbf{3}(5), 637--646 (2016)

\bibitem{xiao2022time}
Xiao, J., Huang, Y., Hu, C., Song, S., Huang, X., Wang, J.: Time series data
  encoding for efficient storage: a comparative analysis in apache iotdb.
\newblock Proceedings of the VLDB Endowment \textbf{15}(10), 2148--2160 (2022)

\bibitem{yu2020two}
Yu, X., Peng, Y., Li, F., Wang, S., Shen, X., Mai, H., Xie, Y.: Two-level data
  compression using machine learning in time series database.
\newblock In: 2020 IEEE 36th International Conference on Data Engineering
  (ICDE), pp. 1333--1344. IEEE (2020)

\bibitem{Yu2021distributed}
Yu, Z., Li, R., Guo, Y., Jiang, Z., Bao, J., Zheng, Y.: Distributed time series
  similarity search method based on key-value data stores.
\newblock Journal of Software \textbf{33}(3), 950--967 (2021)

\bibitem{zhan2022deepthermal}
Zhan, X., Xu, H., Zhang, Y., Zhu, X., Yin, H., Zheng, Y.: Deepthermal:
  Combustion optimization for thermal power generating units using offline
  reinforcement learning.
\newblock In: Proceedings of the AAAI Conference on Artificial Intelligence,
  vol.~36, pp. 4680--4688 (2022)

\bibitem{zhao2021optimizing}
Zhao, K., Di, S., Dmitriev, M., Tonellot, T.L.D., Chen, Z., Cappello, F.:
  Optimizing error-bounded lossy compression for scientific data by dynamic
  spline interpolation.
\newblock In: 2021 IEEE 37th International Conference on Data Engineering
  (ICDE), pp. 1643--1654. IEEE (2021)

\bibitem{zhao2022mdz}
Zhao, K., Di, S., Perez, D., Liang, X., Chen, Z., Cappello, F.: Mdz: An
  efficient error-bounded lossy compressor for molecular dynamics.
\newblock In: 2022 IEEE 38th International Conference on Data Engineering
  (ICDE), pp. 27--40. IEEE (2022)

\end{thebibliography}

\end{document}